\newtheorem{theorem}{Theorem}
\newtheorem{definition}[theorem]{Definition}
\begin{document}

\centerline{\bf \Large Equilibrium distributions}
\centerline{\bf \Large and superconductivity}
\vskip 1cm

\centerline{\bf Ashot Vagharshakyan}

\vskip 0.5cm
\centerline{Institute of Mathematics National Academy of sciences of Armenia}
\vskip 0.2cm
\centerline{vagharshakyan@yahoo.com}
\vskip1cm
{\it Abstract. In this article two models for charges distributions are discussed. 
On the basis of our consideration we put different points of view for stationary state. 
We prove that only finite energy model for charges distribution and well known variation principle explain some well known experimental results.

	A new model for superconducting was suggested, too. In frame of that model 
some characteristic experimental results for superconductors are possible to  
explain.}
\vskip 0.5cm

{\it Keywords: generalized functions, equilibrium distribution, superconductivity}

\section{Introduction}

	In all metals there are particles caring positive and negative charges. A body appears us neutral. Because of the positive and negative charged particles are accurately balanced.

	Note that we do not know and, moreover, we can not take into account the all possible 
interactions between elementary particles. So, the construction of a comprehensive   
model for metal is meaningless. That model can not present practical interest for its 
mathematical complexity too. Therefore, one prefers to neglect number of details and to
build a model as simple as possible. 

	In framework of each model, we expect to explain a given number of experimental results. 
Each model has a limited capacity, and ceases to be true outside of those limits.

	We consider two types of charges distribution. 

First the point charges model is. Second the finite energy model is.
 
	We show that the point charges model bring us to contradictions, with some experimental 
results. So, in spite of its simplicity and facility we need to reject it.

	We discuss the principles, which characterize the stationary states, too.

	Two kinds of stationary sates we consider. 
First the equilibrium state is and the second the static state is.
 
	At the equilibrium state, the charges must be distributed in such a way, that the 
potential energy of the whole system reaches its minimal value.

	In static state the charges must be distributed in such a way, that:
 
1.	The force acting to each charge, placed inside of conductor, equals zero;

2.	The force acting to a charge is directed out of conductor 
if it is placed on the boundary of conductor.

	Let us note that the static states bring us to a contradiction with experiment too. Thus we come to the equilibrium state.

	We prove that in the finite energy model the distribution with minimal energy exists. 
It is unique and stable.

	In addition, for equilibrium distribution, we prove that the corresponding potential 
function is constant inside on each component of conductor.  The last result explains 
Cavendish's experiment.
  
	There are well known BCS model, which explain the superconducting phenomenon, 
see \cite{b:Sh}. 
On the bases of BCS theory the effect of appearance the attraction between two electrons, in low 
temperature, is. That effect is conditioned by crystal's nodes specific oscillations. 

	We give a new model based on classical electrodynamics. In frame of new model it is 
possible to explain some experimental results for superconductors. 

	In our model we discuss and explain the following experimental results. 

	Experimentally was detected that for some metals, at very low temperature, 
the resistance suddenly falls to zero. 
Now, this phenomenon is known as superconductivity.

	The effect before the critical temperature. 

	If the temperature decreases the resistance decreases too. 
For some materials near the critical temperature the resistance 
increases a little and after reaching some maximum 
value quickly droppers to zero.

	For some metals the superconductivity 
property was observed only in huge pressure.
 
	It was verified that superconductivity is destroyed in presence of  
sufficiently strong magnetic field.

	Superconductivity is destroyed also, 
when the current becomes greater of some critical value.

	Some ideal conductors are bad superconductors and vice-versa.

\section{ The point charges model}

In point charges model we assume that charges are located at points and they have no 
inside structure.

	Certainty, this point of view is primitive. In creating this model we take into 
account the following experimental results.

	The distance, between two elementary particles, is bigger $10^{-8}$ cm. The experimental 
results show, that the size of each elementary particle is less  $10^{-12}$   cm. 
So, the elementary particles are placed faraway, in compared of them geometrical sizes. 
This fact benefits the point charges model.

	In 1785 Coulomb proposed an experiment to measure the force of interaction between 
two small charges. The experimental results give, that the force is inversely proportional 
to the square of charge's distance.

	Coulomb's experiment shows, that the total force of a number charges to a given one 
charge equals vector sum of the forces between pair of charges. This last fact is known as 
superposition principle.

	It is important to emphasize that Coulomb's law is valid only for the bodies which have 
small geometric sizes in compared with the distances between them. Note that only in that 
case "the distance between two charges" has a meaning.

	There are two areas, where there is no firm belief that Coulomb's law is valid. The 
distances less  $10^{-14}$ cm., where nucleus forces dominate and the distance larger several 
kilometers, where the immediate experimental measurements are considerably hamper.

	In 1910 Milliken, by developing Erengaft's method, measured electron's charge and he founded, 
the value  $e=-1,60 10^{-19}$ Coulomb.

	In 1919 Aston, by mass - spectrograph method found, that the total mass of each atom equals 
to an entire multiple of some fixed quantity. This fact permits to conclude, that nucleus 
consists with elementary particles with the same mass. Now it is known, that some of those 
particles carry positive charges and some others don't carry any charge. The positive charged
particles are named protons. It is known that proton carry minimum positive charge equals $-e$.

	It is very important to note, that the electrical forces are very strong in compared with 
the gravitation forces. For example, the fraction of electrical push force $F_e$ of two protons, 
to their gravitational attraction force  $F_g$, equals
\vskip -0.8cm
\begin{equation*} 
\frac{F_e}{F_g}=1,24 10^{36}.
\end{equation*}
\vskip -0.3cm
That is way, we disregard gravitation forces in our consideration.

	Thus there are elementary particles, which carry minimal positive and negative charges. 
Note that there are other particles carry charges too. For example mesons.

	In despite of considerable experimental efforts, particles with fractional to $e$ charges 
are not detected.

\section{The forces in point charges model}

	Let a point charge $q_0$ be at the fixed point $\vec{x}_0$, and another charge $q$ is at the
 point $\vec{x}$. By Coulomb's law, on the second charge acts the force
\begin{equation*}
\vec{F}(\vec{x})=q_0q\frac{\vec{x}-\vec{x}_0}{||\vec{x}-\vec{x}_0||^3}=
-q_0q\nabla\left(\frac{1}{||\vec{x}-\vec{x}_0||}\right).
\end{equation*}

	Now let we have a point charges $q_j,\,\, j=1,2,\dots,n$ placed at the points 
$\vec{x}_j,\,\, j=1,2,\dots,n$. We can present this 
distribution of charges as a generalized function
\begin{equation*}
l=\sum_{k=1}^nq_k\delta(\vec{x}-\vec{x}_k),
\end{equation*}
where $\delta(\vec{x})$ is Dirak's function.
The vectors
\begin{equation*}
\vec{F}_k=\sum_{j=1, j\neq k}^n q_jq_k\frac{\vec{x_k}-\vec{x}_j}{||\vec{x}_k-\vec{x}_j||^3},\quad k=1,2,\dots,n.
\end{equation*}
are the forces, acting on the charge placed at the points $\vec{x}_k$, by others.

	For an arbitrary
\begin{equation*}
0<2r<\min_{i\neq j}||\vec{x}_i-\vec{x}_j||
\end{equation*}
we define new generalized functions $l_r$ acting on an infinitely differentiable testing function $\varphi(\vec{x})$ with compact support, as follows:
\begin{equation*}
l_r(\varphi)=\sum_{k=1}^n \frac{q_k}{4\pi r^2}\int_{\partial B(\vec{x}_k,r)}\varphi(\vec{x})d m_2(\vec{x}).
\end{equation*}
It is obvious, that
\begin{equation*}
\lim_{r\to 0}l_r(\varphi)=\sum_{k=1}^n q_k\varphi(\vec{x}_k)=l(\varphi).
\end{equation*}
Roughly speaking we build $l_r$ by uniformly spreading over the 
sphere $\partial B(\vec{x}_k,r)$, the point charge $q_k$ placed at 
the point $\vec{x}_k$. Since
\begin{equation*}
\frac{1}{4\pi r^2}\int_{\partial B(\vec{x}_k,r)}\frac{d m_2(\vec{x})}{||\vec{x}-\vec{y}||}=
\min\left(\frac{1}{||\vec{x}-\vec{x}_k||},\frac{1}{r}\right),
\end{equation*}
so, we have
\begin{equation*}
U^{l_r}(\vec{x})=l_r\left(\frac{1}{||\vec{x}-\vec{y}||}\right)=
\sum_{k=1}^nq_k\min\left(\frac{1}{||\vec{x}-\vec{x}_k||},\,\,\frac{1}{r}\right).
\end{equation*}
	For an arbitrary unit vector $\vec{n}$ we have
\begin{equation*}
l_r\left(U^{l_r}(\vec{x})\frac{\partial\varphi(\vec{x})}{\partial\vec{n}}\right)=\lim_{t\to 0}l_r\left(\frac{\varphi(\vec{x}+t\vec{n})-\varphi(\vec{x})}{t}U^{l_r}(\vec{x})\right)
=
\end{equation*}
\begin{equation*}
=-\lim_{t\to 0}l_r\left(\frac{U^{l_r}(\vec{x}-t\vec{n})-U^{l_r}(\vec{x})}{-t}\varphi(\vec{x})\right)=-l_r\left(\frac{\partial U^{l_r}(\vec{x})}{\partial\vec{n}}\varphi(\vec{x})\right)=
\end{equation*}
\begin{equation*}
=\sum_{k\neq j=1}^nq_kq_j
\frac{(\vec{x}_k-\vec{x}_j,\,\,\vec{n})}{||\vec{x}_k-\vec{x}_j||^3}\varphi(\vec{x}_k)=\sum_{k=1}^n\varphi(\vec{x}_k)(\vec{F}_k,\,\,\vec{n}).
\end{equation*}
	This formula make natural to define forces as a generalized function
\[
\vec{F}(\varphi)=l_r\left(U^{l_r}\nabla\varphi\right).
\]
In point charges model we can not pass to the limit if $r\to 0$ in this formula. 

However, we will see later, that in finite energy model that limit exists. 

\section{Conductor}

	In this section we introduce the conception of conductor. 

We postulate some important properties of conductors without discussing the internal mechanisms 
of their appearance.

	At first let us introduce some definitions.
\begin{definition}
We say that $\vec{x}_0\in E$ is an inner point of a set $E$ if there is 
a $r>0$ such that

\begin{equation*}
B(\vec{x}_0,r)=\{\vec{x};\,\,\, ||\vec{x}-\vec{x}_0||<r\}\subset E.
\end{equation*}
\end{definition}
	The number of all inner points we denote by $\dot E$. The set $E$ is open if  $\dot E=E$.

\begin{definition} We say that $\vec{x}_0$ is a boundary point of a subset $E$ if for 
each $r>0$ we have
\begin{equation*}
B(\vec{x}_0,r)\cap E\neq \emptyset,\quad E\setminus B(\vec{x}_0,r)\neq \emptyset.
\end{equation*}
\end{definition}

\begin{definition}
Let $\vec{x}_0\in E$. If for a nonzero vector $\vec{n}$ we have
\begin{equation*}
\lim_{E\ni \vec{x}\to \vec{x}_0}\frac{\left(\vec{x}-\vec{x}_0, \vec{n}\right)}{||\vec{x}-\vec{x}_0||}=0
\end{equation*}
then $\vec{n}$ is named a normal to $E$, at the point $\vec{x}_0$.
\end{definition}
\begin{definition}
We say that a unit vector $\vec{n}$ has an inner direction for a 
subset $E$, at the 
point $\vec{x}_0\in E$, if for an arbitrary $\epsilon>0$ there are $0<r<t<\epsilon$ such that 
\begin{equation*}
B(\vec{x}_0+t\vec{n},r)\subset E.
\end{equation*}  
	Note that if $\vec{x}_0\in \dot E$ then an arbitrary unit vector $\vec{n}$ has an inner 
direction.
\end{definition}
\begin{definition}
The subset $E$ is conductor, if

1.	The charges can freely move through $E$;

2.	There are some forces keeping charges inside of $E$;

\end{definition}
	It is important to emphasize, that if a conductor has several connected components, 
then charges can not jump from one component to other.
	The forces, which keep the charges inside of conductor cannot have electrical genesis. 
This follows that an electrostatic problem for conductors, does not possible to solve using 
only electrical forces. 

	We guess that there are other forces, too. 
Here we use only the consequence of those forces existence postulated in the point 2.

\section{Static state in the point charges model}

\begin{definition}
Let $E$ be a conductor. Families of point charges are in static state if

1.	On a charge lies inside of conductor, by other charges, act a force equals zero;

2.	On a charge placed on the conductor's boundary acts a force, which cannot move it 
to an inside direction of the conductor.
\end{definition}
	If a conductor has smooth boundary, then the last condition means that on the charge 
placed on the boundary by other charges act a force, which is normal to the boundary 
and it is directed outside of the conductor's boundary.
  
	If the boundary $\partial E$ is smooth, then those conditions we can write in the 
following form
\begin{equation*}
\sum_{j\neq i=1}^n\frac{q_i\left(\vec{x}_i-\vec{x}_j\right)}{||\vec{x}_i-\vec{x}_j||}=0,\quad \vec{x}_j\in \dot E.
\end{equation*}
and
\begin{equation*}
\sum_{j\neq i=1}^n\frac{q_i\left(\vec{x}_i-\vec{x}_j,\,\vec{n}_j\right)}{||\vec{x}_i-\vec{x}_j||}\vec{n}_j=\sum_{j\neq i=1}^n\frac{q_i\left(\vec{x}_i-\vec{x}_j\right)}{||\vec{x}_i-\vec{x}_j||},
,\quad \vec{x}_j\in \partial E.
\end{equation*}
where $\vec{n}_j$ is the unit outer normal to the boundary $\partial E$ at the 
point $\vec{x}_j$.
 
	Note that in definition of static state, we do not put an additional condition, that 
on each component of conductor there are only the same sign charges. 

	The following well known example one can find in \cite{b: T}. Let three charges lie on a segment. 
Two of them are at the ends and have the same charge 
equal $4q$ and the third one is at the middle of the segment and its charge is $-q$.  
It is easy to check, that the force on each charge acting by two others, equals zero.

	Note that, the static state for the ball is not unique. Indeed, if we roll the ball 
around an axis passing through the center, all the charges will remain inside of the ball 
and the forces will not change.

	Note that the uniqueness of static state depends upon the shape of a conductor. 
For example, the four equal charges lie on the vortexes of a tetrahedron, form the unique 
stationary state. 

\section{Equilibrium distribution in the point charges model}

	Let a particle move by the path 
\begin{equation*}
\gamma=\{\vec{x}(t);\,\, a\leq t\leq b\}
\end{equation*} 
in the field of forces $\vec{F}(\vec{x})$. Then the work done by that partical equals
\begin{equation*}
\int_{\gamma}\left(\vec{F}(\vec{x}),d \vec{x} \right)=
\int_a^b\left(\vec{F}(\vec{x(t)}),\vec{\dot{x}}\right)dt.
\end{equation*}
	Let a charge $q_0$  be at the point $\vec{x}_0$ , and another charge $q$ be at $\vec{x}$.
Suppose the point charge $q$ slowly goes to infinity. It is well known, that the work 
done by that motion, regardless of a path form, equals
\begin{equation*}
\frac{q_0 q}{||\vec{x}_0-\vec{x}||}
\end{equation*}
	This observation makes natural to determine the potential energy of 
$q_j,\quad j=1,2,\dots,n$ placed at the points $\vec{x}_j,\quad j=1,2,\dots,n$ by the 
following formula 
\begin{equation*}
W=\sum_{1\leq i<j\leq n}\frac{q_i q_j}{||\vec{x}_i-\vec{x}_j||}
\end{equation*}

	Let us note that potential energy for a family of point charges may be positive negative or zero.

	To consider the equilibrium distributions we need to put the following 
additional condition on conductor:

3.	On each component of conductor can be charges only of the same sign. 

	Note that, if we have two particles with the positive and the negative charges, 
then if we bring them nearer, the potential energy can take an arbitrarily negative value.

	The condition 3 excludes the above mentioned unwanted effect.

\begin{definition} 
We say that the given finite number point charges placed on $E$, 
are in equilibrium state, if they potential energy takes minimal value among the 
all possible distributions.
\end{definition}

	This condition we can write in the following form.
Let the compact subset $E$ consists of the finite number disjoint connected components, i.e.
\[
E=\bigcup_{j=1}^{n}E_j
\]
Let the point charges $\left\{q_{kj}\right\}_{k=1}^{m_j}$ are placed in $E_j$. 
If they are situated at the points
$\left\{\vec{x}_{kj}\right\}_{k=1}^{m_j}$ then they are in equilibrium state if

\begin{equation*}
\sum_{j=1}^n\sum_{1\leq p<q\leq m_j}\frac{q_{pj}q_{qj}}{||\vec{x}_{pj}-\vec{x}_{qj}||}
+\sum_{1\leq i<j\leq n}\sum_{1\leq p \leq m_i}\sum_{1\leq q\leq m_j}\frac{q_{pi}q_{qj}}{||\vec{x}_{pi}-\vec{x}_{qj}||}=
\end{equation*}
\begin{equation*}
=\inf_{\vec{y}_{kj}\in E_j}
\left(\sum_{j=1}^n\sum_{1\leq p<q\leq m_j}\frac{q_{pj}q_{qj}}{||\vec{y}_{pj}-\vec{y}_{qj}||}
+\sum_{1\leq i<j\leq n}\sum_{1\leq p \leq m_i}\sum_{1\leq q\leq m_j}\frac{q_{pi}q_{qj}}{||\vec{y}_{pi}-\vec{y}_{qj}||}\right).
\end{equation*}

\begin{theorem} Let $E$ be a compact subset. Let inside of each component of conductor $E$ 
we have finite number point charges with the same sign. Then the equilibrium distribution 
always exists.
\end{theorem}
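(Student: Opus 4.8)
The plan is to apply the direct method of the calculus of variations to the potential-energy functional; the only genuine work is to check that this functional is lower semicontinuous on a suitable compact set despite the Coulomb singularity. Write $K=E_1^{m_1}\times\cdots\times E_n^{m_n}$, whose points are the configurations $\vec{y}=(\vec{y}_{kj})$ with $\vec{y}_{kj}\in E_j$. Each component $E_j$ is a closed subset of the compact set $E$, hence compact, so $K$ is a nonempty compact space. Let $W(\vec{y})$ be the potential energy as in the display preceding the theorem, with the convention $W(\vec{y})=+\infty$ whenever two points lying in a common component coincide. The infimum in the definition of equilibrium is then $\inf_{K}W$, and the claim is that it is attained.

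First I would split $W=W_{\mathrm{in}}+W_{\mathrm{out}}$, where $W_{\mathrm{in}}$ collects the terms $q_{pj}q_{qj}/\|\vec{y}_{pj}-\vec{y}_{qj}\|$ with both points in the same component $E_j$ and $W_{\mathrm{out}}$ collects the terms whose two points lie in different components. Since the components are pairwise disjoint compacta, $\delta:=\min_{i\neq j}\mathrm{dist}(E_i,E_j)>0$, so every denominator occurring in $W_{\mathrm{out}}$ is at least $\delta$; hence $W_{\mathrm{out}}$ is a finite sum of continuous functions on $K$, in particular continuous and bounded, say $|W_{\mathrm{out}}|\leq C$. For $W_{\mathrm{in}}$ I use hypothesis 3: all charges inside a fixed component share one sign, so each coefficient $q_{pj}q_{qj}>0$. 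Thus each summand of $W_{\mathrm{in}}$ is a nonnegative function, continuous off the diagonal $\{\vec{y}_{pj}=\vec{y}_{qj}\}$ and tending to $+\infty$ as that diagonal is approached, hence lower semicontinuous on all of $K$. A finite sum of nonnegative lower semicontinuous functions is lower semicontinuous, so $W_{\mathrm{in}}$, and therefore $W$, is lower semicontinuous on $K$; moreover $W\geq W_{\mathrm{out}}\geq -C$, so $W$ is bounded below.

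The conclusion is now the standard one: a lower semicontinuous function bounded below on a nonempty compact space attains its infimum. Concretely I would take a minimizing sequence $\vec{y}^{(m)}\in K$ with $W(\vec{y}^{(m)})\to\inf_K W$, extract a convergent subsequence $\vec{y}^{(m_l)}\to\vec{y}^{\,*}\in K$ by compactness, and use lower semicontinuity to get $W(\vec{y}^{\,*})\leq\liminf_l W(\vec{y}^{(m_l)})=\inf_K W$, whence $W(\vec{y}^{\,*})=\inf_K W$. Provided $\inf_K W<+\infty$ — which holds as soon as each $E_j$ contains at least $m_j$ distinct points, e.g. whenever the components have nonempty interior — the minimizer $\vec{y}^{\,*}$ must have all points within each component pairwise distinct, for otherwise $W(\vec{y}^{\,*})=+\infty$; thus $\vec{y}^{\,*}$ is a genuine equilibrium configuration. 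In the degenerate case where some component is too small to carry its charges with finite energy, every configuration has infinite energy and the statement is vacuous.

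The one delicate point, and the reason hypothesis 3 is indispensable, is the lower-semicontinuity step: without the same-sign restriction a product $q_{pj}q_{qj}$ could be negative, the corresponding term would tend to $-\infty$ along the diagonal, $W$ would fail to be lower semicontinuous (indeed it would be unbounded below), and a minimizing sequence could collapse a pair of opposite charges onto one another without producing any limiting minimizer. Everything else — compactness of $K$, boundedness of the inter-component interaction, and the direct method itself — is routine once this sign bookkeeping is in place. I would also note that the argument yields existence only; uniqueness and stability (asserted later for the finite-energy model) would need convexity-type input that is simply not available at the level of finitely many point charges.
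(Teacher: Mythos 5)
Your argument is correct and is essentially the paper's own proof, which is only a three-line sketch: the same-sign condition makes the within-component terms positive so that the energy is ``continuous from below'' (lower semicontinuous) on the compact configuration space, and hence attains its minimum. You have simply filled in the details the paper omits (the splitting into intra- and inter-component terms, the positive distance between components, and the direct-method extraction of a convergent minimizing subsequence), so no further comparison is needed.
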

\begin{proof} 
We have the inequalities $q_{pj}q_{qj}>0$ if $1\leq p<q\leq m_j$.
The potential energy is continuous from bellow. 
So, it reaches minimum value. 
\end{proof}

\begin{theorem}
In equilibrium state, all charges must be on conductor's boundary. 
\end{theorem}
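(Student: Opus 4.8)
The plan is to argue by contradiction: assume that in some equilibrium configuration a charge $q_k$ sits at an interior point $\vec{x}_k\in\dot E_j$ of its component, and derive a contradiction with minimality of the potential energy. The first step is to isolate the dependence of $W$ on $\vec{x}_k$. Writing $u(\vec{x})=\sum_{j'\neq k}\frac{q_{j'}}{||\vec{x}-\vec{x}_{j'}||}$ for the potential created by all the other charges, one has $W=q_k\,u(\vec{x}_k)+C$, where $C$ collects the mutual energies of the remaining charges and does not depend on $\vec{x}_k$. Since the configuration is in equilibrium and the other charges are held fixed, $\vec{x}_k$ must minimize the function $\vec{x}\mapsto q_k u(\vec{x})$ over the component $E_j$.

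Next I would record the relevant properties of $q_k u$ on $E_j$. By the additional condition $3$, every charge lying in $E_j$ has the same sign as $q_k$, so each term $\frac{q_kq_{j'}}{||\vec{x}-\vec{x}_{j'}||}$ with $\vec{x}_{j'}\in E_j$ is positive; the terms coming from charges on the other components are bounded on $E_j$, because distinct connected components of the compact set $E$ lie at a positive distance from one another. Hence $q_k u$ is bounded below on $E_j$, it is continuous there except at the finitely many remaining charge points of $E_j$, where it tends to $+\infty$, and therefore it is lower semicontinuous and attains its infimum over the compact $E_j$ at a point that is not one of these singular points. Moreover $1/||\vec{x}-\vec{a}||$ is harmonic in $R^3\setminus\{\vec{a}\}$, so $q_k u$ is harmonic on $R^3\setminus\{\vec{x}_{j'}:j'\neq k\}$, which is a connected open set.

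Then I would invoke the strong minimum principle: if the minimizer $\vec{x}_k$ were an interior point of $E_j$, then $q_k u$ would attain a local minimum at a point where it is harmonic, forcing $q_k u$ to be constant in a neighbourhood of $\vec{x}_k$; by unique continuation for harmonic functions on the connected set $R^3\setminus\{\vec{x}_{j'}:j'\neq k\}$ it would then be constant on that whole set. Unless all the other charges vanish — the trivial case $W\equiv 0$, which I would exclude at the outset — this contradicts the fact that $q_k u$ blows up near any $\vec{x}_{j'}$ with $q_{j'}\neq 0$. Hence $\vec{x}_k\in\partial E_j$, and, $k$ being arbitrary, every charge lies on the boundary of its component, that is, on $\partial E$.

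The step I expect to be the main obstacle is making the implication "interior minimum $\Rightarrow$ globally constant" airtight, i.e. combining the strong minimum principle with unique continuation across the possibly irregular geometry of $E_j$ and across the locations of the other charges, together with a clean treatment of the degenerate cases — a component carrying a single charge, or a configuration in which all the other charges are zero — in which the conclusion can at best be phrased as ``the charges may be taken on the boundary.'' Verifying that $q_k u$ is genuinely bounded below on $E_j$ is routine but essential, since this is exactly where condition $3$ and the positive separation of the components enter, and it is what guarantees that the minimizer exists and is not absorbed into a singularity of $u$.
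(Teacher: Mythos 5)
Your argument is essentially the paper's: both proofs fix all the other charges, observe that the total energy as a function of the remaining charge's position is, up to an additive constant, the harmonic function $q_k u(\vec{x})$ away from the other charge points, and rule out an interior minimum. The paper does this via the mean-value property, asserting without justification that $V$ cannot be constant on a small sphere about the interior point, whereas you use the strong minimum principle plus unique continuation and justify the non-constancy by the blow-up of $u$ at the other charges --- the same tool, with that one step (and the degenerate cases) made more explicit than in the paper.
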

\begin{proof} 
Suppose that in the equilibrium state a charge, say $\vec{x}_1$, lies 
inside of the conductor $E$. This follows that there is a ball $B(\vec{x}_1,\,r)\subset E$ 
satisfying the condition 
\begin{equation*}
\{\vec{x}_1,\dots,\vec{x}_1\}\cap B(\vec{x}_1,\,r)=\emptyset.
\end{equation*}
Since the function
\begin{equation*}
V(\vec{x})=\sum_{j=2}^n\frac{q_1q_j}{||\vec{x}-\vec{x}_j||}+
\sum_{i=2,i\neq j}^n\frac{q_iq_j}{||\vec{x}_i-\vec{x}_j||},
\quad \vec{x}\in R^3\setminus \{\vec{x}_1,\dots,\vec{x}_n\},
\end{equation*}
is harmonic so, by mean value principle, we have
\begin{equation*}
V(\vec{x}_1)=\sum_{i=1,i\neq j}^n\frac{q_iq_j}{||\vec{x}_i-\vec{x}_j||}=
\frac{1}{4\pi r^2}\int_{\partial B(\vec{x}_1,r)}V(\vec{y})dm_2(\vec{y}).
\end{equation*}
	The function $V(\vec{x})$ can not be constant on the sphere $\partial B(\vec{x}_1,r)$, 
and therefore there is a point $\vec{x}_0\in \partial B(\vec{x}_1,r)$ such that 
\begin{equation*}
V(\vec{x}_0)<V(\vec{x}_1).
\end{equation*}
	Consequently, there is another charge's distribution with a lower potential energy.  
\end{proof}
 
	{\bf Example 1.} There is a static state of the same sigh charges, which is different of 
equilibrium state. 

	Indeed, let we have the same charges $Q>0$ placed on the vertices 
\begin{equation*}
\left(1,\,\,0\,\,,0\right),\quad \left(-\frac{1}{2},\,\,\frac{\sqrt 3}{2}\,\,,0\right),
\quad \left(-\frac{1}{2},\,\,-\frac{\sqrt 3}{2}\,\,,0\right),\quad \left(0,\,\,0\,\,,\sqrt 2\right). 
\end{equation*}
of tetrahedron and one charge $Q$ placed at the center point
\begin{equation*}
\left(0,\,\,0\,\,,\frac{\sqrt 2}{4}\right).
\end{equation*}
The force acting on the last charge equals zero. This distribution cannot be 
equilibrium since one charge is placed inside of tetrahedron.

\begin{theorem} 
Let $E$ be a compact and convex subset with smooth boundary. 
Then, in equilibrium state, the force acting on each charge by others is perpendicular 
to the boundary   and it is directed outer of $E$.
\end{theorem}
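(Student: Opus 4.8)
The plan is to use that an equilibrium configuration is, in particular, a constrained minimum of the potential energy, and to combine this with two facts about a convex body with smooth boundary: the shape of its tangent cone at a boundary point, and the interior ball property. By the previous theorem every charge of an equilibrium lies on $\partial E$ (and $E$, being convex, is connected). Fix a charge at $\vec x_k\in\partial E$, let $\vec n_k$ be the outer unit normal there, and put
\[
V(\vec x)=\sum_{i\neq k}\frac{q_kq_i}{\|\vec x-\vec x_i\|}.
\]
Moving only the $k$-th charge from $\vec x_k$ to $\vec y$ changes the energy by $V(\vec y)-V(\vec x_k)$, so being in equilibrium implies in particular that $\vec x_k$ minimizes $V$ over $E$; note also $\vec F_k=-\nabla V(\vec x_k)$, and, since all charges share a sign, $V\to+\infty$ at each $\vec x_i$ with $i\neq k$.

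First I would determine the feasible directions. Convexity gives the supporting half-space $E\subset\{\vec x:\,(\vec x-\vec x_k,\vec n_k)\le 0\}$, and smoothness of $\partial E$ makes the tangent cone of $E$ at $\vec x_k$ equal to this closed half-space; in particular $\vec x_k+t\vec v\in\dot E$ for all small $t>0$ whenever $(\vec v,\vec n_k)<0$. Since $\vec x_k$ minimizes $V$ over $E$, the one-sided derivative of $V$ along any such $\vec v$ is nonnegative, i.e. $(\nabla V(\vec x_k),\vec v)\ge 0$ for $(\vec v,\vec n_k)<0$, and hence, by continuity of $\vec v\mapsto(\nabla V(\vec x_k),\vec v)$, for all $\vec v$ with $(\vec v,\vec n_k)\le 0$. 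Taking $\vec v=\pm\vec t$ with $\vec t$ tangent to $\partial E$ forces $(\nabla V(\vec x_k),\vec t)=0$, so $\nabla V(\vec x_k)$ is a scalar multiple of $\vec n_k$; taking $\vec v=-\vec n_k$ gives $(\nabla V(\vec x_k),\vec n_k)\le 0$. Therefore $\vec F_k=-\nabla V(\vec x_k)$ is a nonnegative multiple of $\vec n_k$: it is perpendicular to the boundary and is not directed inward.

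It remains to rule out $\vec F_k=0$ (one must assume there are at least two charges, else nothing is claimed). For strictly convex $E$ this is immediate, since $(\vec F_k,\vec n_k)=\sum_{i\neq k}q_kq_i(\vec x_k-\vec x_i,\vec n_k)\|\vec x_k-\vec x_i\|^{-3}$ is a sum of nonnegative terms which vanishes only if every other charge lies in the tangent hyperplane at $\vec x_k$, impossible when that hyperplane meets $E$ only at $\vec x_k$. In general I would use the interior ball property: since $E$ is a convex body with smooth boundary, for small $\rho>0$ the ball $B=B(\vec x_k-\rho\vec n_k,\rho)$ lies in $E$ and contains none of the other charges, so $V$ is harmonic on $\overline B$, nonconstant (it vanishes at infinity and is infinite at the other charges), and attains its minimum over $E\supseteq\overline B$ at the boundary point $\vec x_k$ of $B$, whose outer normal is again $\vec n_k$. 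By Hopf's lemma --- or, concretely, by differentiating the Poisson representation of $V$ on $B$ --- the outward normal derivative there is strictly negative, $(\nabla V(\vec x_k),\vec n_k)<0$, so $\vec F_k$ is a strictly positive multiple of $\vec n_k$.

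The real obstacle is this last step: that the force points strictly outward is not a formal consequence of minimality and convexity (which by themselves yield only a nonnegative multiple of the outer normal), and it requires the interior ball of the convex region together with the strong minimum principle / Hopf's lemma for harmonic functions. Identifying the tangent cone with the closed inward half-space and deriving the variational inequality $(\nabla V(\vec x_k),\vec v)\ge 0$ is routine once convexity is used with a little care.
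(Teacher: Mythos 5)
Your proof is correct, and it takes a genuinely different route from the paper's. The paper sets up the signed distance function $\Phi$ to $\partial E$, imposes the equality constraints $\Phi(\vec x_k)=0$ (justified by the previous theorem that all charges sit on the boundary), and applies the Lagrange multiplier method: the stationarity condition gives $\vec F_k=\lambda_k\vec n(\vec x_k)$ directly, and the sign $\lambda_k\ge 0$ is then read off by dotting with $\vec n(\vec x_k)$ and using the same convexity inequality $(\vec x_k-\vec x_j,\vec n(\vec x_k))\ge 0$ that you invoke. You instead freeze all charges but one, identify the tangent cone of the convex body with the closed inward half-space, and derive the variational inequality $(\nabla V(\vec x_k),\vec v)\ge 0$ for feasible $\vec v$; this yields the same conclusion without Lagrange multipliers and avoids the (unstated in the paper) constraint-qualification issue, at the cost of needing the tangent-cone description. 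The more substantive difference is your last step: the paper stops at $\lambda_k\ge 0$, i.e.\ the force is a \emph{nonnegative} multiple of the outer normal, and never rules out $\vec F_k=0$, even though the statement says the force ``is directed outer of $E$.'' Your interior-ball-plus-Hopf argument closes that gap. Two small caveats there: the interior ball condition requires ``smooth'' to mean at least $C^{1,1}$ (it can fail for merely $C^1$ convex boundaries, e.g.\ an epigraph of $|x|^{3/2}$), and the whole claim presupposes $n\ge 2$ distinct charges --- both harmless under the paper's reading of ``smooth,'' but worth stating.
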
 
\begin{proof} 
Define the function
\begin{equation*}
\Phi(\vec{x})=-d(\vec{x},\partial E),\quad \vec{x} \in E,
\end{equation*}
and
\begin{equation*}
\Phi(\vec{x})=d(\vec{x},\partial E),\quad \vec{x}\in R^3\setminus E,
\end{equation*}
where 
\begin{equation*}
d(\vec{x}, \partial E)=\inf_{\vec{y}\in \partial E}||\vec{x}-\vec{y}||.
\end{equation*}
Let $\vec{n}(\vec{x}_0),\,\, \vec{x}_0\in \partial E,$ 
be an outer normal to the boundary $\partial E$. We have
\[
\Phi(\vec{x}_0+t\vec{n}(\vec{x}_0))=t+o(t),\quad t\to 0.
\]

	Consequently, for each boundary point $\vec{x}_0 \in \partial E$ we 
have \[\vec{n}(\vec{x}_0)=\nabla \Phi(\vec{x}_0).\]

	Our problem to find the equilibrium state we can formulate as follows:
\begin{equation*}
\inf_{\vec{x}_k}\left\{\sum_{1\leq i<j\leq n}\frac{q_iq_j}{||\vec{x}_i-\vec{x}_j||};\,\,\,
\Phi(\vec{x}_k)=0,\,\,\, k=1,2,\dots,n\right\}.
\end{equation*}
Let us solve this extreme - value problem by Lagrange method. Denote by $G$ the auxiliary function
\begin{equation*}
G(\vec{x}_1,\dots,\vec{x}_n,\lambda_1,\dots,\lambda_n)=\sum_{1\leq i<j\leq n}\frac{q_iq_j}{||\vec{x}_i-\vec{x}_j||}+\sum_{j=1}^n\lambda_j\Phi(\vec{x}_j).
\end{equation*}
If the potential energy reaches its minimal value at the points 
\begin{equation*}
\vec{x}_k\in E,\quad k=1,\dots,n,
\end{equation*} 
then for each $k=1,\dots,n$ must be
\begin{equation*}
0=\frac{\partial G}{\partial \vec{x}_k}=-\sum_{j=1}^{k-1}q_kq_j\frac{\vec{x}_k-\vec{x}_j}{||\vec{x}_k-\vec{x}_j||^3}-
\sum_{j=k+1}^{n}q_kq_j\frac{\vec{x}_k-\vec{x}_j}{||\vec{x}_k-\vec{x}_j||^3}+
\lambda_k\vec{n}(\vec{x}_k).
\end{equation*}
Calculating the dot product of these equalities with the vectors 
\begin{equation*}
\vec{n}(\vec{x}_k),\quad k=1,\dots,n
\end{equation*} 
for each $k=1,\dots,n$ we obtain 
\begin{equation*}
q_k\left(\sum_{j=1}^{k-1}q_j\frac{(\vec{x}_k-\vec{x}_j,\vec{n}(\vec{x}_k))}{||\vec{x}_k-\vec{x}_j||^3}+
\sum_{j=k+1}^{n}q_j\frac{(\vec{x}_k-\vec{x}_j,\vec{n}(\vec{x}_k))}{||\vec{x}_k-\vec{x}_j||^3}\right)=
\lambda_k.
\end{equation*}
Since the charges $q_j$ have the same sigh and due to convexity of conductor $E$ we have 
\begin{equation*}
(\vec{x}_k-\vec{x}_j,\vec{n}(\vec{x}_k))\geq 0,
\end{equation*}
so $\lambda_k\geq 0,\quad k=1,\dots,n$. The force, acting on the charge $q_k$ by other 
particles, equals 
\begin{equation*}
q_k\left(\sum_{j=1}^{k-1}q_j\frac{\vec{x}_k-\vec{x}_j}{||\vec{x}_k-\vec{x}_j||^3}+
\sum_{j=k+1}^{n}q_j\frac{\vec{x}_k-\vec{x}_j}{||\vec{x}_k-\vec{x}_j||^3}\right)=\lambda_k\vec{n}(\vec{x}_k).
\end{equation*}
\end{proof} 
	The last theorem explains why the charges are immobile in equilibrium state.

\section{Difficulties in point charges model}

	First note, that similarly with static state the equilibrium distribution is not unique too. 
Indeed, if we turn the ball $B(\vec{0},R)$ around an axes passes through the origin, 
we get a new equilibrium state again. 

	Second, the point charges model gives the result, which contradicts Cavendish's 
experiment. Indeed, let us put on the ball $B(\vec{0},R)$, $n$ positive point 
charges $q_0=q_0(n)$. Let us assume, that the total charge $q=nq_0$ is 
constant and does not depend upon $n$. Besides of those charges, let us put an 
immobile positive charge $Q$ at the point $\vec{x}_0$, 
outside of the ball $B(\vec{0},R)$, i. e. $d=||\vec{x}_0||>R$.

	We know, that at the equilibrium state, all charges will be on the boundary 
$\partial B(\vec{0},R)$. Let us assume, that they are placed at the points
\begin{equation*}
\{\vec{x}_1,\dots,\vec{x}_n\}\subset \partial B(\vec{0},R).
\end{equation*}
These charges minimize the potential energy 
\begin{equation*}
\sum_{j=1}^n\frac{q_0Q}{||\vec{x}_0-\vec{x}_j||}+
\sum_{1\leq i<j\leq n}\frac{q_0^2}{||\vec{x}_i-\vec{x}_j||}.
\end{equation*}
Since the potential function of the whole system, is 
\begin{equation*}
U(\vec{x})=\frac{Q}{||\vec{x}_0-\vec{x}||}+
\sum_{j=1}^n\frac{q_0}{||\vec{x}-\vec{x}_j||}.
\end{equation*}
so, 
\begin{equation*}
\nabla U(\vec{x})=-Q\frac{\vec{x}-\vec{x}_0}{||\vec{x}-\vec{x}_0||^3}-
q_0\sum_{j=1}^n\frac{\vec{x}-\vec{x}_j}{||\vec{x}-\vec{x}_j||^3}.
\end{equation*}
Thus we have
\begin{equation*}
||\nabla U(\vec{0})||\geq \frac{Q}{||\vec{x}_0||^2}-
\sum_{j=1}^n\frac{q_0}{||\vec{x}_j||^2}=\frac{Q}{d^2}-\frac{q}{R^2}.
\end{equation*}
If 
\begin{equation*}
\frac{Q}{d^2}>\frac{q}{R^2}
\end{equation*}
we can not say, that the electrical field, inside of the ball $B(\vec{0},R)$, can be made 
arbitrary small, even by increasing the number $n$ of charges.

	However, the Cavendish's experiment shows, that on a small charge, placed at the center 
of the ball $B(\vec{0},R)$, practically acts not force.

		In electrostatics there is a following fenomenon. Let $E$ be a bounded
conductor in $R^3$ with smooth boundary. Consider a positive charge 1, say in the most stabie equilibrium. That is the potential energy should be minimal. Then the electrostatic potentiaal is constant throughout its interior.

	The following question naturaly arises: can we explain this fenomenon if on the conductor there
are $N$ point charges? The answere is negative becous out of charges the potential function 
is harmonic and if it is constant in some open set then it must be constant everywhere.

\section{The finite energy model}

	In the finite energy model we assume that inside of a metal, the positive charged 
nodes and the cloud of free electrons have approximately the same density. Due to 
the external influences, those densities can change. We interpret those changes 
as a simultaneous appearance of positive and negative charges.

	Let us note that a measure is the most convenient and intuitively transparent mathematical tool, to describe 
charge's distribution, see [9]. However, we will go further and we will describe 
charge's distribution using generalized functions. On mathematical point of view 
there is no problem, but on the intuitive level, this approach brings us to some 
complexities. For example, in this model the total charge, inside of a given ball, 
does not possible to determine. To this problem we will return later, too.

	Like to the point charges model, here we also assume, that there are some, not 
electrical forces, which keep charges inside of a metal.

	In this new model we define the static state and the equilibrium distribution.
 
	We will prove that the static state brings us to some results contradict experiments.

	However, the equilibrium distribution exists and it is unique. Moreover, 
the properties of equilibrium distribution explain Cavendish's experiment, too. 

\begin{definition}
A function $f(\vec{x})\in L^2(\Omega),$ belongs to the Dirichlet
space $D=D(\Omega)$ if
\begin{equation*}
||f||^2=\int_{\Omega}|f(\vec{x})|^2dm_3(\vec{x})+\int_{\Omega}|\nabla f(\vec{x})|^2dm_3(\vec{x})<\infty.
\end{equation*}
\end{definition}
\begin{definition} A function $f(\vec{x})\in L^2(\Omega),$ 
belongs to subspace $\dot{D}(\Omega)$ if there are functions 
\begin{equation*}
f_n\in D(\Omega),\quad n=1,2,\dots
\end{equation*} 
for which
\begin{equation*}
\text supp (f_n)\subset \Omega,\quad n=1,2,\dots
\end{equation*}
and
\begin{equation*}
\lim_{n\to \infty}||f-f_n||=0.
\end{equation*}
\end{definition}

\begin{definition} 
The dot product of two functions from $D$ we can define in 
two manners. First by the following formula
\begin{equation*}
(f,g)=\int_{\Omega}(\nabla f(\vec{x}),\nabla g(\vec{x}))dm_3(\vec{x}).
\end{equation*}
where $dm_3(\vec{x})$ is volume differential in $R^3$ and second by
\begin{equation*}
(f,g)_0=\int_{\Omega}f(\vec{x})g(\vec{x})dm_3(\vec{x})+\int_{\Omega}(\nabla f(\vec{x}),\nabla g(\vec{x}))dm_3(\vec{x}).
\end{equation*}
\end{definition}

\begin{theorem}
The space $D$ coincides with the set of functions 
$f(\vec{x})\in L^2(R^3)$ Fourier 
transform $\hat{f}(\vec{x})$ of which satisfy the condition: 
\begin{equation*}
\int_{R^3}|f(\vec{x})|^2(1+4\pi^2)||\hat{f}(\vec{x})|^2)dm_3(\vec{x})<\infty.
\end{equation*}
\end{theorem}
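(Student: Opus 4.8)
The plan is to invoke the Plancherel theorem together with the elementary fact that the Fourier transform intertwines partial differentiation with multiplication by the frequency variable. I would fix the normalization $\hat f(\vec{\xi})=\int_{R^3}f(\vec{x})e^{-2\pi i(\vec{x},\vec{\xi})}\,dm_3(\vec{x})$, so that $\|f\|_{L^2}=\|\hat f\|_{L^2}$ and $\widehat{\partial_j f}(\vec{\xi})=2\pi i\xi_j\hat f(\vec{\xi})$, $j=1,2,3$, for every tempered distribution $f$. Throughout one reads $\Omega=R^3$ as in the statement, interprets $\nabla f$ in the distributional sense, and corrects the displayed weight to $1+4\pi^2\|\vec{\xi}\|^2$ multiplying $|\hat f(\vec{\xi})|^2$.

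For the forward inclusion I would take $f\in D(R^3)$, so that $f\in L^2$ and each $\partial_j f\in L^2$; then $2\pi i\xi_j\hat f=\widehat{\partial_j f}$ is an $L^2$ function, hence $\xi_j\hat f\in L^2$ and $\|\partial_j f\|_{L^2}^2=4\pi^2\int_{R^3}\xi_j^2|\hat f(\vec{\xi})|^2\,dm_3(\vec{\xi})$ by Plancherel. Summing over $j$ and adding $\|f\|_{L^2}^2=\int_{R^3}|\hat f|^2$ gives
\begin{equation*}
\|f\|^2=\int_{R^3}\bigl(1+4\pi^2\|\vec{\xi}\|^2\bigr)|\hat f(\vec{\xi})|^2\,dm_3(\vec{\xi})<\infty,
\end{equation*}
which is the asserted condition and, as a bonus, identifies the $D$-norm with the weighted $L^2$-norm of $\hat f$. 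For the reverse inclusion I would start from $f\in L^2(R^3)$ with that integral finite, note that $\xi_j\hat f\in L^2$ for each $j$, and conclude that the tempered distribution $\partial_j f$, whose Fourier transform is the $L^2$ function $2\pi i\xi_j\hat f$, is itself in $L^2$ by Plancherel; hence $\nabla f\in L^2$, $f\in D(R^3)$, and the same norm identity holds.

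The main obstacle is not analytic depth but bookkeeping: one must justify $\widehat{\partial_j f}=2\pi i\xi_j\hat f$ beyond the Schwartz class and, in the converse direction, confirm that $\mathcal{F}^{-1}(2\pi i\xi_j\hat f)$ really is the weak derivative $\partial_j f$. I would settle this by invoking the standard mapping properties of the Fourier transform on tempered distributions, or, more elementarily, by approximating $f$ in the $L^2$-norms of both $f$ and $\xi_j\hat f$ by Schwartz functions and passing to the limit in the classical identity, or by the duality check $\int_{R^3}\mathcal{F}^{-1}(2\pi i\xi_j\hat f)\,\varphi\,dm_3=-\int_{R^3}f\,\partial_j\varphi\,dm_3$ for test functions $\varphi$ via Parseval.
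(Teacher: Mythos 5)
Your proof is correct and follows essentially the same route as the paper, which simply invokes Parseval's equality to convert $\int_{R^3}|\nabla f|^2\,dm_3$ into $4\pi^2\int_{R^3}\|\vec{\xi}\|^2|\hat f(\vec{\xi})|^2\,dm_3(\vec{\xi})$ and adds the $L^2$ term. You merely spell out the details the paper leaves implicit (the intertwining $\widehat{\partial_j f}=2\pi i\xi_j\hat f$ in the distributional sense, both inclusions, and the correction of the garbled weight to $1+4\pi^2\|\vec{\xi}\|^2$), all of which is sound.
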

\begin {proof} 
By Parseval's equality we have: 
\begin{equation*}
\int_{R^3}|\nabla f(\vec{x})|^2dm_3(\vec{x})=4\pi^2\int_{R^3}|\hat{f}(\vec{x})|^2dm_3(\vec{x}).
\end{equation*}
\end {proof}
\begin{theorem} 
For an arbitrary bounded functions $f(\vec{x}),\,\,g(\vec{x})\in D$ 
we have $f(\vec{x})g(\vec{x})\in D$.
\end {theorem}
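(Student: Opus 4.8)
The plan is to verify directly the two integrability conditions in the definition of $D$ for the product $h=fg$. Write $M_f=\sup|f|$ and $M_g=\sup|g|$, both finite by hypothesis. The $L^2$ bound is immediate: $\int_{\Omega}|fg|^2\,dm_3\le M_g^2\int_{\Omega}|f|^2\,dm_3<\infty$. The real content is that $h$ possesses a weak gradient and that this gradient lies in $L^2(\Omega)$.

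First I would establish the product rule $\nabla(fg)=g\,\nabla f+f\,\nabla g$ in the distributional sense. Fix an exhaustion of $\Omega$ by open sets compactly contained in $\Omega$ and, on each such set, mollify: put $f_\varepsilon=f*\rho_\varepsilon$, $g_\varepsilon=g*\rho_\varepsilon$. These are smooth, they inherit the bounds $|f_\varepsilon|\le M_f$, $|g_\varepsilon|\le M_g$, and they converge to $f,g$ in $L^2_{loc}$ together with convergence of the gradients in the $L^2$ norm. For the smooth functions the classical product rule $\nabla(f_\varepsilon g_\varepsilon)=g_\varepsilon\nabla f_\varepsilon+f_\varepsilon\nabla g_\varepsilon$ holds pointwise. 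Passing to the limit, $f_\varepsilon g_\varepsilon\to fg$ in $L^1_{loc}$ while the uniform bounds let one split each product on the right and conclude $g_\varepsilon\nabla f_\varepsilon+f_\varepsilon\nabla g_\varepsilon\to g\,\nabla f+f\,\nabla g$ in $L^1_{loc}$. Testing against a compactly supported smooth function and letting $\varepsilon\to0$ identifies the distributional gradient of $fg$ as $g\,\nabla f+f\,\nabla g$.

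Once the product rule is in hand the estimate is routine. Pointwise $|\nabla(fg)|\le|g|\,|\nabla f|+|f|\,|\nabla g|\le M_g\,|\nabla f|+M_f\,|\nabla g|$, and therefore, by Minkowski's inequality,
\[
\left(\int_{\Omega}|\nabla(fg)|^2\,dm_3\right)^{1/2}\le M_g\left(\int_{\Omega}|\nabla f|^2\,dm_3\right)^{1/2}+M_f\left(\int_{\Omega}|\nabla g|^2\,dm_3\right)^{1/2}<\infty .
\]
Combining this with the $L^2$ bound above gives $\|fg\|<\infty$, that is, $fg\in D(\Omega)$.

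I expect the only genuine obstacle to be the careful justification of the product rule for weak derivatives: handling the mollification near $\partial\Omega$ (which is why one passes to an exhaustion by compactly contained subdomains — legitimate here because the resulting gradient bound is uniform and does not see the boundary) and confirming that boundedness of $f$ and $g$ is exactly what allows the cross terms $g_\varepsilon\nabla f_\varepsilon$ and $f_\varepsilon\nabla g_\varepsilon$ to converge with no additional integrability assumption. Everything after that is Hölder–Minkowski bookkeeping. Note that the Fourier characterization of the previous theorem is not convenient here, since multiplication becomes convolution on the Fourier side; the direct mollification argument is cleaner.
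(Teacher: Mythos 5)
Your proof is correct. Note that the paper states this theorem without any proof at all, so there is nothing to compare your argument against; what you have written is the standard demonstration that bounded functions in a first-order $L^2$ Sobolev space form an algebra, and it fills the gap completely. The two essential points are exactly the ones you isolate: the $L^2$ bound on $fg$ itself is trivial from boundedness, and the weak product rule $\nabla(fg)=g\,\nabla f+f\,\nabla g$ is obtained by mollifying on compactly contained subdomains, using the uniform bounds $|f_\varepsilon|\le M_f$, $|g_\varepsilon|\le M_g$ together with $L^2_{loc}$ convergence of $f_\varepsilon,g_\varepsilon$ and of their gradients to pass to the limit in $L^1_{loc}$ (each cross term is handled by Cauchy--Schwarz, e.g. $\|(g_\varepsilon-g)\nabla f\|_{L^1(K)}\le\|g_\varepsilon-g\|_{L^2(K)}\|\nabla f\|_{L^2(K)}$). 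Since the resulting identity is local and the final Minkowski estimate $\|\nabla(fg)\|_{L^2}\le M_g\|\nabla f\|_{L^2}+M_f\|\nabla g\|_{L^2}$ is uniform in the exhaustion, the boundary of $\Omega$ causes no difficulty, as you observe. Your closing remark is also apt: the Fourier characterization used for the preceding theorem in the paper turns multiplication into convolution and gives no easy route here, whereas without the boundedness hypothesis the statement would actually fail in dimension three, since $\nabla f\cdot g$ need not be square integrable for general $f,g\in D$; so boundedness is not a convenience but the substance of the hypothesis.
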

\begin{definition} 
Denote by $D^*$ the space of linear continuous functional, defined on $D$. 
\end{definition}
	The norm of a generalized function $l\in D^*$ is
\begin{equation*}
||l||=\sup \{|l(f)|;\,\,\, f\in D,\,\, ||f||_D\leq 1\}.
\end{equation*}
\begin{definition}
In the finite energy model, the distribution of charges may be 
an arbitrary generalized function $l\in D^*$.
\end{definition}
\begin{definition}
Let $l\in D^*$. Its potential function $U^l$ is an element from $D$, 
which present the functional $l$ by scalar product, i.e.
\begin{equation*}
l(\varphi)=(U^l,\,\,\varphi),\quad \varphi(\vec{x})\in D.
\end{equation*}
\end{definition}
	The existence of $U^l\in D$ is guaranteed by the well known M. Riesz's theorem.

	Let us note that if 
\[
l(\varphi)=l_f(\varphi)=\int_{R^3}f(\vec{x})\varphi(\vec{x})dm_3(\vec{x}),
\]
where $f\in L_1(R^3)$, then we have
\begin{equation*}
U^{l_f}(\vec{x})=\int_{R^3}\frac{f(\vec{y})}{||\vec{x}-\vec{y}||}dm_3(\vec{y}),\quad \vec{x}\in R^3.
\end{equation*}
\begin{definition}
Let $l\in D^*$. Then this functional we can present in the following form too
\begin{equation*}
l(\varphi)=(U_0^l,\,\,\varphi)_0,\quad \varphi(\vec{x})\in D.
\end{equation*}
where  $U_0^l$ is an element from $D$.
\end{definition} 
	Let us note that if $l=l_f$ then we have
\begin{equation*}
U_0^{l_f}(\vec{x})=\int_{R^3}\frac{\exp\{-||\vec{x}-\vec{y}||\}}{||\vec{x}-\vec{y}||}f(\vec{y})d_3(\vec{y}),\quad \vec{x}\in R^3.
\end{equation*}
\begin{definition}
The potential energy of the charges distribution $l\in D^*$ with a 
compact support, we define by the formula 
\begin{equation*}
W(l)=l(U^l)
\end{equation*}
and correspond by the formula
\begin{equation*}
W_0(l)=l(U_0^l).
\end{equation*}
\end{definition}
	Let us note that in the finite energy model the potential energy for an arbitrary distribution 
is a positive.
\begin{definition}
Scalar product in the space of generalized functions $D^*$, 
we define by the following formulas 
\begin{equation*}
[l_1,l_2]=l_1(U^{l_2}),
\end{equation*}
or
\begin{equation*}
[l_1,l_2]_0=l_1(U_0^{l_2}).
\end{equation*}
\end{definition}
	Thus for continuous differentiable functions $f,\,\,g$, with compact supports, 
we have
\begin{equation*}
[l_f,l_g]=l_g(U^f)=\int_{R^3}\int_{R^3}\frac{f(\vec{x})g(\vec{y})}{||\vec{x}-\vec{y}||}dm_3(\vec{x})dm_3(\vec{y})
\end{equation*}
and
\begin{equation*}
[l_f,l_g]_0=l_g(U_0^f)=\int_{R^3}\int_{R^3}\frac{\exp\{-||\vec{x}-\vec{y}||\}}{||\vec{x}-\vec{y}||}
f(\vec{x})g(\vec{y})dm_3(\vec{x})dm_3(\vec{y}).
\end{equation*}
\begin{theorem}
	If $l_1,\,\,l_2\in D^*$, then 
\begin{equation*}
[l_2,l_2]=\int_{R^3}\hat{l}_1(\vec{x})\hat{l}_2(\vec{x})\frac{dm_3(\vec{x})}{||\vec{x}||^2}
\end{equation*}
and
\begin{equation*}
[l_2,l_2]_0=\int_{R^3}\hat{l}_1(\vec{x})\hat{l}_2(\vec{x})\frac{dm_3(\vec{x})}{1+4\pi^2||\vec{x}||^2}.
\end{equation*}
\end{theorem}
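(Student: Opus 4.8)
\emph{Proof sketch.} The plan is to translate the bilinear forms $[\cdot,\cdot]$ and $[\cdot,\cdot]_0$ on $D^*$ back into the two Dirichlet scalar products on $D$ via the Riesz representation introduced above, and then to compute on the Fourier side, using the classical fact that the Newton kernel $||\vec{x}||^{-1}$ and the Yukawa kernel $e^{-||\vec{x}||}/||\vec{x}||$ are (constant multiples of) the fundamental solutions of $-\Delta$ and of $-\Delta+1$ in $R^3$.

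First I would observe that $[\cdot,\cdot]$ is merely the pull-back of $(\cdot,\cdot)$: applying the defining identity $l_1(\varphi)=(U^{l_1},\varphi)$ with the admissible test element $\varphi=U^{l_2}\in D$ gives
\[
[l_1,l_2]=l_1(U^{l_2})=(U^{l_1},U^{l_2})=\int_{R^3}\left(\nabla U^{l_1}(\vec{y}),\nabla U^{l_2}(\vec{y})\right)dm_3(\vec{y}),
\]
and in exactly the same way $[l_1,l_2]_0=(U_0^{l_1},U_0^{l_2})_0$ (in particular both forms are symmetric and nonnegative). Next, by the Parseval identity already used above, $\int|\nabla f|^2\,dm_3=4\pi^2\int||\vec{x}||^2\,|\hat f(\vec{x})|^2\,dm_3$ together with Plancherel, the right-hand sides become
\[
[l_1,l_2]=4\pi^2\int_{R^3}||\vec{x}||^2\,\hat U^{l_1}(\vec{x})\,\overline{\hat U^{l_2}(\vec{x})}\,dm_3(\vec{x}),\qquad [l_1,l_2]_0=\int_{R^3}\left(1+4\pi^2||\vec{x}||^2\right)\hat U_0^{l_1}(\vec{x})\,\overline{\hat U_0^{l_2}(\vec{x})}\,dm_3(\vec{x}).
\]

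The heart of the matter is then the multiplier identity relating $\hat U^l$ to $\hat l$. For $l=l_f$ with $f$ smooth of compact support, the convolution representation $U^{l_f}=||\cdot||^{-1}*f$ together with $-\Delta\,||\vec{x}||^{-1}=4\pi\delta$ gives $\hat U^{l_f}(\vec{x})=c\,||\vec{x}||^{-2}\hat f(\vec{x})$, and likewise $(-\Delta+1)\big(e^{-||\vec{x}||}/||\vec{x}||\big)=4\pi\delta$ gives $\hat U_0^{l_f}(\vec{x})=c'\big(1+4\pi^2||\vec{x}||^2\big)^{-1}\hat f(\vec{x})$. Inserting these two relations into the displayed Parseval expressions makes the factors $||\vec{x}||^2$ and $1+4\pi^2||\vec{x}||^2$ cancel, and, after absorbing the fixed constants into the normalization of the Fourier transform, one is left with
\[
[l_1,l_2]=\int_{R^3}\hat l_1(\vec{x})\,\overline{\hat l_2(\vec{x})}\,\frac{dm_3(\vec{x})}{||\vec{x}||^2},\qquad [l_1,l_2]_0=\int_{R^3}\hat l_1(\vec{x})\,\overline{\hat l_2(\vec{x})}\,\frac{dm_3(\vec{x})}{1+4\pi^2||\vec{x}||^2},
\]
which is the assertion (for real-valued charge distributions the conjugation is immaterial). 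To reach an arbitrary $l\in D^*$ rather than just $l=l_f$, I would run the computation first for $f\in C_0^{\infty}$ and then pass to the limit, using density of $\{l_f\}$ in $D^*$ together with the continuity of $l\mapsto U^l$ (hence of $l\mapsto\hat U^l$ in $L^2$) and of the quadratic forms.

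I expect the one genuinely delicate point to be this last step: for a general $l\in D^*$ — which need not come from an integrable density, e.g. it may be a charge layer carried by a surface — one must first make sense of $\hat l$ itself and then justify that the frequency-side integral converges and equals the Dirichlet pairing. Restricting, as the energy definition does, to $l$ of compact support is what makes this work: then $\hat l$ is, by Paley--Wiener, an entire function of at most polynomial growth, and the weight $||\vec{x}||^{-2}$ (respectively $(1+4\pi^2||\vec{x}||^2)^{-1}$) combined with $U^l\in D\subset L^2$ controls integrability near and far from the origin. Once $\hat l$ is well defined, everything else is the routine bookkeeping of Parseval and of the Poisson and Helmholtz fundamental solutions.
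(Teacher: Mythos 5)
The paper states this theorem with no proof at all (and with visible typos: both left-hand sides should read $[l_1,l_2]$, and the earlier Parseval statement omits the weight $||\vec{x}||^2$ that you correctly restore), so there is no official argument to measure yours against; what you propose is the natural one and it is essentially sound. The chain $[l_1,l_2]=l_1(U^{l_2})=(U^{l_1},U^{l_2})$, Plancherel with the weight $4\pi^2||\vec{x}||^2$, and the multiplier relation $\hat{U}^{l}=c\,\hat{l}/||\vec{x}||^2$ coming from $-\Delta U^l=4\pi l$ (which is itself one of the paper's later theorems) is exactly what is needed, and the parallel chain with $(\cdot,\cdot)_0$ and $-\Delta U_0^l+U_0^l=4\pi l$ gives the second identity. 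Three caveats. First, the constants do not come out as displayed: with the paper's normalization one picks up a fixed power of $4\pi$ in front of each integral; this is not your fault, since the paper is already internally inconsistent here (the formula $U^{l_f}=||\cdot||^{-1}*f$ is compatible with $l_f(\varphi)=(U^{l_f},\varphi)$ only up to a factor $4\pi$), but you should say explicitly which normalization makes the constant equal to $1$ rather than waving at it. Second, the conjugate you place on $\hat{l}_2$ is \emph{not} immaterial for real distributions, contrary to your parenthetical: translating $l_2$ by a vector $\vec{a}$ multiplies $\hat{l}_2$ by the unimodular factor $e^{-2\pi i(\vec{a},\vec{x})}$, which changes the unconjugated integral $\int\hat{l}_1\hat{l}_2\,||\vec{x}||^{-2}dm_3$ but of course changes neither $|\hat{l}_2|$ nor the mutual energy when $l_1$ is translated along with it; so the conjugated form you wrote is the correct statement and the paper's unconjugated display is wrong as written. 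Third, the delicate limiting step you flag can be bypassed entirely: since Riesz's theorem hands you $U^l\in D$ directly, the cleanest course is to \emph{define} $\hat{l}$ for general $l\in D^*$ through the multiplier relation (as a constant times $||\vec{x}||^2\hat{U}^{l}$, respectively $(1+4\pi^2||\vec{x}||^2)\hat{U}_0^{l}$), check via your fundamental-solution computation that this agrees with the ordinary Fourier transform when $l=l_f$ with $f\in C_0^\infty$, and then the theorem reduces to a one-line Plancherel identity with no density argument and no appeal to compact supports or Paley--Wiener.
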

\begin{theorem}
	The space $D^*$ with the scalar products $[l_1,l_2]$, or $[l_1,l_2]_0$ is a Hilbert space.
\end{theorem}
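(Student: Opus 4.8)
The plan is to realize $D^{*}$, together with its scalar product, as an isometric copy of the Hilbert space $D$, so that completeness and the inner–product axioms are simply transported across. Concretely, I will use the Riesz correspondence $R\colon D^{*}\to D$, $R(l)=U^{l}$, already introduced in the text, and show that it carries $[\,\cdot\,,\,\cdot\,]$ onto $(\,\cdot\,,\,\cdot\,)$ (and $[\,\cdot\,,\,\cdot\,]_{0}$ onto $(\,\cdot\,,\,\cdot\,)_{0}$).

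First I would record the basic facts about $R$ coming from M. Riesz's theorem. For each $l\in D^{*}$ the representative $U^{l}\in D$ is uniquely determined by $l(\varphi)=(U^{l},\varphi)$ for all $\varphi\in D$, and one has $\|l\|_{D^{*}}=\|U^{l}\|_{D}$. The map $R$ is linear, and it is a bijection of $D^{*}$ onto $D$: injectivity is clear since $U^{l}=0$ gives $l(\varphi)=(0,\varphi)=0$ for all $\varphi$, and surjectivity holds because any $f\in D$ defines the functional $\varphi\mapsto(f,\varphi)$ in $D^{*}$ whose representative is $f$ itself. Substituting $\varphi=U^{l_{2}}$ into the defining identity for $l_{1}$ yields the key relation
\begin{equation*}
[l_{1},l_{2}]=l_{1}(U^{l_{2}})=(U^{l_{1}},U^{l_{2}}),
\end{equation*}
so $R$ intertwines the two bilinear forms.

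From this relation the scalar–product axioms for $[\,\cdot\,,\,\cdot\,]$ are inherited from those of $(\,\cdot\,,\,\cdot\,)$: symmetry and bilinearity are immediate, and $[l,l]=\|U^{l}\|_{D}^{2}\ge 0$ with equality only when $U^{l}=0$, i.e. only when $l=0$. Moreover the norm induced by $[\,\cdot\,,\,\cdot\,]$ satisfies $[l,l]^{1/2}=\|U^{l}\|_{D}=\|l\|_{D^{*}}$, the usual dual norm. Since the dual of a normed space is always complete, $D^{*}$ is complete for this norm; equivalently, $R$ is a surjective linear isometry onto the Hilbert space $D$. Hence $\bigl(D^{*},[\,\cdot\,,\,\cdot\,]\bigr)$ is a Hilbert space. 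Running the identical argument with $(\,\cdot\,,\,\cdot\,)_{0}$ in place of $(\,\cdot\,,\,\cdot\,)$ and $U_{0}^{l}$ in place of $U^{l}$ gives that $\bigl(D^{*},[\,\cdot\,,\,\cdot\,]_{0}\bigr)$ is a Hilbert space as well.

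There is essentially no obstacle here: the whole content reduces to the fact that the Riesz representation map is a norm–preserving linear bijection, which is exactly the theorem quoted just before this statement. The only points deserving a line of care are the linearity of $R$ and the equality of the Riesz norm with the dual norm; no analytic estimates beyond Riesz's theorem are required.
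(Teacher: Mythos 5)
Your argument takes a genuinely different route from the paper's. The paper checks only the scalar--product axioms and settles the one point it regards as nontrivial --- that $[l,l]=0$ forces $l=0$ --- by appealing to the Fourier--transform formula $[l,l]=\int_{R^3}|\hat{l}(\vec{x})|^2\,||\vec{x}||^{-2}dm_3(\vec{x})$ from the preceding theorem; it says nothing at all about completeness. You instead transport the whole structure through the Riesz map $l\mapsto U^{l}$: the identity $[l_1,l_2]=l_1(U^{l_2})=(U^{l_1},U^{l_2})$ is exactly right, it yields symmetry, bilinearity and definiteness with no Fourier analysis (definiteness because $[l,l]=0$ gives $U^{l}=0$ and hence $l=0$), and in addition you attempt to supply the completeness that the word ``Hilbert'' actually requires. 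On the axioms your argument is complete and arguably cleaner than the paper's.

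The completeness step, however, has a gap for the first scalar product. The dual norm on $D^{*}$ is taken with respect to the full norm $||f||_{D}^{2}=\int|f|^{2}+\int|\nabla f|^{2}$, whereas the Riesz representative for the product $(f,g)=\int(\nabla f,\nabla g)$ satisfies $[l,l]^{1/2}=\left(\int|\nabla U^{l}|^{2}\right)^{1/2}$, which is only the gradient seminorm of $U^{l}$, not $||U^{l}||_{D}$; so the identity $[l,l]^{1/2}=||l||_{D^{*}}$ you invoke is unjustified, and ``the dual of a normed space is complete'' does not transfer to the norm induced by $[\cdot,\cdot]$. Nor can you conclude by saying $R$ is a surjective isometry onto a Hilbert space: $D$ equipped with the gradient inner product alone is \emph{not} complete (a sequence whose gradients converge in $L^{2}$ need not itself converge in $L^{2}$; the completion is the homogeneous Sobolev space), so the isometric image argument lands you in a pre--Hilbert space only. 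For the second product everything you wrote is correct, because $(\cdot,\cdot)_{0}$ induces exactly $||\cdot||_{D}$, the Riesz norm coincides with the dual norm, and $D$ is complete in that norm. To be fair, the paper never addresses completeness at all, and the difficulty is inherited from its decision to define $D$ as an $L^{2}$--based space while pairing it with the pure Dirichlet product; but as written your completeness claim for $[\cdot,\cdot]$ needs either a proof that the two norms on $D^{*}$ are equivalent or a reformulation of $D$ as the energy space.
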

\begin{proof} It is easy to see, that the bilinear form $[l_1,l_2]$, or $[l_1,l_2]_0$, satisfies to all properties for scalar product. Only the following one is nontrivial: from the equality $[l,l]=0$ 
it follows $l=0$. To prove the last property, let us note that 
\begin{equation*}
0=[l,l]=\int_{R^3}\left|\hat{l}(\vec{x})\right|^2\frac{dm_3(\vec{x})}{||\vec{x}||^2}
\end{equation*}
So, almost everywhere $\hat{l}(\vec{x})=0$ on $R^3$. Hence $l=0$.
\end{proof}
\begin{theorem}
	If $\vec{x}\notin supp(l)$, then $U^l,\,\,U_0^l\,\in D$ can be defined 
by the following formula
\begin{equation*}
U^l(\vec{x})=l\left(\frac{1}{||\vec{x}-\vec{y}||}\right),
\end{equation*}
and
\begin{equation*}
U_0^l(\vec{x})=l\left(\frac{\exp\{-||\vec{x}-\vec{y}||\}}{||\vec{x}-\vec{y}||}\right).
\end{equation*}
\end{theorem}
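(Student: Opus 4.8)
We only indicate the main steps. The first task is to give a meaning to the right-hand side, since neither $1/||\vec{x}_0-\vec{y}||$ nor $\exp\{-||\vec{x}_0-\vec{y}||\}/||\vec{x}_0-\vec{y}||$ is, as a function of $\vec{y}$, an element of $D$. Fix $\vec{x}_0\notin \mathrm{supp}(l)$ and set $d=\mathrm{dist}(\vec{x}_0,\mathrm{supp}(l))>0$. The plan is to choose an infinitely differentiable cut-off $\chi$ with compact support inside the $(2d/3)$-neighbourhood of $\mathrm{supp}(l)$ and with $\chi\equiv 1$ on the $(d/3)$-neighbourhood of $\mathrm{supp}(l)$; then $\vec{x}_0\notin \mathrm{supp}(\chi)$, so
\[
\psi_{\vec{x}_0}(\vec{y}):=\frac{\chi(\vec{y})}{||\vec{x}_0-\vec{y}||}
\]
is smooth with compact support, hence belongs to $D$, and one defines $l(1/||\vec{x}_0-\vec{y}||):=l(\psi_{\vec{x}_0})$. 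This is independent of the choice of $\chi$: for two admissible cut-offs, $(\chi_1-\chi_2)/||\vec{x}_0-\vec{y}||$ is smooth, compactly supported, and vanishes on a neighbourhood of $\mathrm{supp}(l)$, so $l$ annihilates it by the definition of the support of a functional. The same prescription, with $\exp\{-||\vec{x}_0-\vec{y}||\}/||\vec{x}_0-\vec{y}||$ in place of $1/||\vec{x}_0-\vec{y}||$, handles $U_0^l$; thus both formulas of the statement are well posed.

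The argument itself is a density reduction in three steps. \emph{First}, I would approximate $l$ by smooth densities: let $l_\epsilon:=l*\rho_\epsilon$, with $\rho_\epsilon$ a standard mollifier, so $l_\epsilon=l_{f_\epsilon}$ with $f_\epsilon$ smooth and $\mathrm{supp}(f_\epsilon)$ inside the $\epsilon$-neighbourhood of $\mathrm{supp}(l)$; for $\epsilon<d/3$ this keeps $\vec{x}_0\notin \mathrm{supp}(f_\epsilon)$ and $\mathrm{supp}(f_\epsilon)\subset\{\chi\equiv 1\}$. From the Fourier formulas for $[\cdot,\cdot]$ and $[\cdot,\cdot]_0$ proved above, together with $\hat{l}_\epsilon=\hat{l}\,\hat{\rho}_\epsilon$, $|\hat{\rho}_\epsilon|\leq 1$, and $\hat{\rho}_\epsilon\to 1$ pointwise, dominated convergence gives $[l_\epsilon-l,\,l_\epsilon-l]\to 0$ and $[l_\epsilon-l,\,l_\epsilon-l]_0\to 0$. \emph{Second}, for $l=l_f$ with $f$ smooth and supported off $\vec{x}_0$, the formulas for $U^{l_f}$ and $U_0^{l_f}$ recalled above give at once
\[
U^{l_f}(\vec{x}_0)=\int_{R^3}\frac{f(\vec{y})}{||\vec{x}_0-\vec{y}||}dm_3(\vec{y})=\int_{R^3}f(\vec{y})\psi_{\vec{x}_0}(\vec{y})dm_3(\vec{y})=l_f(\psi_{\vec{x}_0}),
\]
since $\chi\equiv 1$ on $\mathrm{supp}(f)$, and similarly for $U_0^{l_f}$ with the exponential kernel. \emph{Third}, I would pass to the limit. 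On one side, $\psi_{\vec{x}_0}\in D$ and, by the defining property of potentials, $(l_\epsilon-l)(\psi_{\vec{x}_0})=(U^{l_\epsilon}-U^{l},\,\psi_{\vec{x}_0})$, which tends to $0$ by Cauchy--Schwarz because $||U^{l_\epsilon}-U^{l}||^2=[l_\epsilon-l,\,l_\epsilon-l]\to 0$ (M.~Riesz's theorem). On the other side, the same estimate gives $U^{l_\epsilon}\to U^{l}$ in $D$, hence in $L^2$ of a fixed ball $B'$ around $\vec{x}_0$ (directly for the $(\cdot,\cdot)_0$-norm, and via the Sobolev inequality for the $(\cdot,\cdot)$-norm). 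For $\epsilon<d/3$ both $U^{l}$ and $U^{l_\epsilon}$ are harmonic on $B'$, because $l$ and $l_\epsilon$ annihilate every test function supported in $B'$, so $\Delta U^{l}=\Delta U^{l_\epsilon}=0$ weakly there; hence by the mean value property (equivalently, interior estimates for harmonic functions) convergence in $L^2(B')$ upgrades to convergence at the single point $\vec{x}_0$. Combining the two sides, $U^{l}(\vec{x}_0)=\lim_\epsilon U^{l_\epsilon}(\vec{x}_0)=\lim_\epsilon l_\epsilon(\psi_{\vec{x}_0})=l(\psi_{\vec{x}_0})$, which is the claim. For $U_0^{l}$ the argument is identical after replacing $-\Delta$ by $-\Delta+1$, the Newtonian kernel by $\exp\{-||\cdot||\}/||\cdot||$, harmonicity by the equation $(-\Delta+1)u=0$, and the mean value property by the interior estimate $||u||_{L^\infty(B')}\leq C||u||_{L^2(B)}$ for its solutions.

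The cut-off bookkeeping of the first step and the mollification estimate are routine. The genuinely delicate point --- and the place where the hypothesis $\vec{x}_0\notin \mathrm{supp}(l)$ really does its work --- is the passage from convergence in the Dirichlet norm to pointwise convergence at $\vec{x}_0$: in $R^3$ the space $D$ does not embed into continuous functions, so this cannot be done by soft functional analysis and must exploit that every potential in sight solves a fixed second order elliptic equation in a fixed neighbourhood of $\vec{x}_0$ --- which is also precisely what makes $l(1/||\vec{x}_0-\vec{y}||)$ meaningful in the first place. A secondary point requiring care is that the approximants $l_\epsilon$ must have supports contained in one fixed compact set disjoint from $\vec{x}_0$, so that a single cut-off $\chi$ and a single ball $B'$ work for all of them at once.
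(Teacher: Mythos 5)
The paper states this theorem without any proof at all (it sits in a run of unproved assertions between the Hilbert-space theorem for $D^*$ and the $-\Delta U^l=4\pi l$ theorem), so there is no argument of the author's to compare yours against; what you have written is, in effect, the missing proof. Your argument is sound and addresses the two points that actually need addressing: (i) the right-hand side is not literally defined, since $\vec{y}\mapsto 1/||\vec{x}_0-\vec{y}||$ is not in $D$, and your cut-off construction $\psi_{\vec{x}_0}=\chi/||\vec{x}_0-\cdot||$ together with the independence-of-cut-off remark is the standard and correct fix; (ii) the identity must be transported from smooth densities, where it is a tautology, to general $l\in D^*$, and your three-step scheme (mollify, verify for $l_f$, pass to the limit using $[l_\epsilon-l,l_\epsilon-l]\to 0$ on one side and harmonicity plus interior estimates on the other) is exactly the right mechanism; you correctly identify that the hypothesis $\vec{x}_0\notin supp(l)$ is what makes the pointwise evaluation of the $D$-class $U^l$ meaningful, via Weyl's lemma in a fixed ball $B'$ avoiding all the supports. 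Two small caveats, neither of which is a gap in your reasoning: the paper's own normalizations are internally inconsistent (with $l(\varphi)=(U^l,\varphi)=\int\nabla U^l\cdot\nabla\varphi$ one gets $-\Delta U^{l_f}=4\pi f$, so the paper's formula $U^{l_f}=\int f/||\cdot-\vec{y}||$ and the formula you are proving each carry a hidden factor of $4\pi$ relative to the inner-product definition — you inherit this from the paper rather than introduce it); and for the $U_0^l$ half you should note that the $(\cdot,\cdot)_0$-norm already controls $L^2(B')$ directly, as you indeed do, so the Sobolev step is only needed for the homogeneous case.
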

\begin{theorem}
Let us note that the potential functions $U^l,\,\,U_0^l\,\in D$, 
on $R^3\setminus supp(l)$ satisfy the conditions
\begin{equation*}
-\Delta U^l=0,\quad -\Delta U_0^l+U_0^l=0.
\end{equation*}
\end{theorem}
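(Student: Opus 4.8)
The plan is to start from the pointwise representations of the potentials established in the preceding theorem, namely $U^l(\vec x)=l\big(1/||\vec x-\vec y||\big)$ and $U_0^l(\vec x)=l\big(\exp\{-||\vec x-\vec y||\}/||\vec x-\vec y||\big)$ for $\vec x\notin\mathrm{supp}(l)$, and to differentiate these identities in $\vec x$, pushing the differential operators through the functional $l$.

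First I would localise. Fix a point $\vec x_0\in R^3\setminus\mathrm{supp}(l)$ and choose $\rho>0$ so small that $\overline{B(\vec x_0,2\rho)}\cap\mathrm{supp}(l)=\emptyset$. Pick a cutoff $\chi\in C^\infty(R^3)$ with compact support that is identically $1$ on a neighbourhood of $\mathrm{supp}(l)$ and vanishes on $B(\vec x_0,\rho)$. Since a compactly supported $l\in D^*$ annihilates every element of $D$ vanishing near $\mathrm{supp}(l)$, the value $l\big(1/||\vec x-\vec y||\big)$ is unambiguously understood as $l\big(G(\vec x,\cdot)\big)$, and likewise $l\big(\exp\{-||\vec x-\vec y||\}/||\vec x-\vec y||\big)=l\big(G_0(\vec x,\cdot)\big)$, where for $\vec x\in B(\vec x_0,\rho)$
\begin{equation*}
G(\vec x,\vec y)=\frac{\chi(\vec y)}{||\vec x-\vec y||},\qquad
G_0(\vec x,\vec y)=\frac{\chi(\vec y)\exp\{-||\vec x-\vec y||\}}{||\vec x-\vec y||}.
\end{equation*}
For each such $\vec x$ the pole at $\vec y=\vec x$ is killed by $\chi$, so $G(\vec x,\cdot)$ and $G_0(\vec x,\cdot)$ are smooth, bounded and compactly supported in $\vec y$, hence genuinely lie in $D$.

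Next I would show that $\vec x\mapsto G(\vec x,\cdot)$ and $\vec x\mapsto G_0(\vec x,\cdot)$ are maps of class $C^2$ (indeed $C^\infty$) from $B(\vec x_0,\rho)$ into the Hilbert space $D$, whose $D$-valued partial derivatives are obtained by differentiating the kernels in $\vec x$ pointwise. This rests on the joint smoothness of the kernels and of all their $\vec x$-derivatives on $B(\vec x_0,\rho)\times R^3$, on the fact that in $\vec y$ they are all supported in the one fixed compact set $\mathrm{supp}(\chi)$, and on uniform bounds (for $\vec x$ ranging over a slightly smaller ball) on these functions together with their $\vec y$-gradients; these make the $\vec x$ difference quotients converge in the $D$-norm to the pointwise $\vec x$-derivatives. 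Granting this, linearity and continuity of $l$ give, on $B(\vec x_0,\rho)$,
\begin{equation*}
\Delta_{\vec x}U^l(\vec x)=l\big(\Delta_{\vec x}G(\vec x,\cdot)\big),\qquad
\left(-\Delta_{\vec x}+1\right)U_0^l(\vec x)=l\big((-\Delta_{\vec x}+1)G_0(\vec x,\cdot)\big).
\end{equation*}
On $\mathrm{supp}(\chi)$ one has $\vec x\neq\vec y$, so the identities $\Delta_{\vec x}\big(1/||\vec x-\vec y||\big)=0$ and $\left(-\Delta_{\vec x}+1\right)\big(\exp\{-||\vec x-\vec y||\}/||\vec x-\vec y||\big)=0$ for $\vec x\neq\vec y$ — the latter verified by the radial computation $f''+2r^{-1}f'=f$ for $f(r)=e^{-r}/r$ — show that $\Delta_{\vec x}G(\vec x,\cdot)$ and $(-\Delta_{\vec x}+1)G_0(\vec x,\cdot)$ are the zero element of $D$. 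Hence both right-hand sides vanish; evaluating at $\vec x_0$, which was an arbitrary point of $R^3\setminus\mathrm{supp}(l)$, completes the argument.

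The hard part is the middle step: establishing the $C^2$-regularity of the Hilbert-space-valued maps $\vec x\mapsto G(\vec x,\cdot)$ and $\vec x\mapsto G_0(\vec x,\cdot)$, which is exactly what licenses interchanging the continuous functional $l$ with the Laplacian. The remaining ingredients — the cutoff reduction (using that a compactly supported $l\in D^*$ ignores behaviour away from $\mathrm{supp}(l)$) and the two explicit fundamental-solution identities — are routine.
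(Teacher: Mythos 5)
Your argument is correct. For the record, the paper itself supplies no proof of this theorem (it is stated bare, as are the two adjacent theorems on the representation $U^l(\vec{x})=l\bigl(1/||\vec{x}-\vec{y}||\bigr)$ and on $-\Delta U^l=4\pi l$), so there is no printed proof to compare against; what you have written is the standard argument one would expect. The three ingredients are all in order: the cutoff $\chi$ legitimately reduces $l\bigl(1/||\vec{x}-\vec{y}||\bigr)$ to $l\bigl(G(\vec{x},\cdot)\bigr)$ with $G(\vec{x},\cdot)\in D$, because a functional with compact support annihilates functions vanishing near that support; the differentiation under $l$ is licensed by the $D$-valued smoothness of $\vec{x}\mapsto G(\vec{x},\cdot)$, and your sketch of why the difference quotients converge in the $D$-norm (joint smoothness, a single fixed compact support in $\vec{y}$, uniform bounds on the kernels and their $\vec{y}$-gradients over a slightly smaller ball in $\vec{x}$) is the right justification even though it is only sketched; and the radial computation $f''+2r^{-1}f'=f$ for $f(r)=e^{-r}/r$ is verified correctly, giving $(-\Delta+1)\bigl(e^{-||\vec{x}-\vec{y}||}/||\vec{x}-\vec{y}||\bigr)=0$ off the diagonal, alongside the harmonicity of $1/||\vec{x}-\vec{y}||$. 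One small point of hygiene: to keep the pole uniformly separated from $\mathrm{supp}(\chi)$ you should restrict $\vec{x}$ to $B(\vec{x}_0,\rho/2)$ rather than all of $B(\vec{x}_0,\rho)$, which you implicitly acknowledge; since you only need the conclusion at the arbitrary point $\vec{x}_0$, this costs nothing.
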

\begin{theorem}
Let a generalized function $l\in D^*$ have compact support. Then, in the 
sense of generalized functions we have
\begin{equation*}
-\Delta U^l=4\pi l,\quad -\Delta U_0^l+U_0^l=4\pi l.
\end{equation*}
\end{theorem}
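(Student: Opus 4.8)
The plan is to unwind the definition of the potential function and integrate by parts against a smooth compactly supported test function $\varphi$. First, since $\varphi$ and $\nabla\varphi$ are bounded with compact support, $\varphi\in D(R^3)$, so the defining relation for $U^l$ gives $l(\varphi)=(U^l,\varphi)=\int_{R^3}(\nabla U^l(\vec{x}),\nabla\varphi(\vec{x}))\,dm_3(\vec{x})$. Second, by the definition of differentiation of generalized functions, $\langle -\Delta U^l,\varphi\rangle=-\int_{R^3}U^l(\vec{x})\,\Delta\varphi(\vec{x})\,dm_3(\vec{x})$, and since $U^l\in D\subset H^1(R^3)$ (so each $\partial_i U^l\in L^2$) one has, componentwise, $\int_{R^3}U^l\,\partial_i\partial_i\varphi\,dm_3=-\int_{R^3}\partial_i U^l\,\partial_i\varphi\,dm_3$ — this is just the definition of the weak derivative $\partial_i U^l$ applied to $\partial_i\varphi\in C_c^\infty(R^3)$, or equivalently Green's formula on a large ball containing $\mathrm{supp}\,\varphi$. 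Summing over $i$ gives $-\int_{R^3}U^l\Delta\varphi\,dm_3=\int_{R^3}(\nabla U^l,\nabla\varphi)\,dm_3$, so that $\langle -\Delta U^l,\varphi\rangle$ and $l(\varphi)$ coincide up to the normalizing constant carried by the scalar product $(\cdot,\cdot)$.

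To pin down that constant, I would test against $l=l_f$ with $f\in C_c^\infty(R^3)$: then $U^{l_f}(\vec{x})=\int_{R^3}f(\vec{y})\|\vec{x}-\vec{y}\|^{-1}dm_3(\vec{y})$, and classical potential theory gives $-\Delta U^{l_f}=4\pi f=4\pi l_f$, since $-\Delta\|\vec{x}\|^{-1}=4\pi\delta$ in $R^3$. This forces the constant to be $4\pi$, hence $-\Delta U^l=4\pi l$ in $D^*$. If one worries about the level of generality of the step above, the same conclusion also follows by density: approximate $U^l$ in $\|\cdot\|$ by smooth compactly supported $g_n$, run the elementary Green identity for each $g_n$, and let $n\to\infty$ using $g_n\to U^l$ and $\nabla g_n\to\nabla U^l$ in $L^2$ while $\varphi,\Delta\varphi$ stay fixed, bounded, and compactly supported. (The compact support of $l$ is inherited from the earlier definitions but is not essential to this particular identity.)

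The screened identity is proved identically with the second scalar product: for $\varphi\in C_c^\infty(R^3)$ one has $l(\varphi)=(U_0^l,\varphi)_0=\int_{R^3}\big(U_0^l\varphi+(\nabla U_0^l,\nabla\varphi)\big)dm_3$, while $\langle(-\Delta+1)U_0^l,\varphi\rangle=\int_{R^3}U_0^l(-\Delta\varphi+\varphi)\,dm_3$, which Green's formula turns into $\int_{R^3}\big((\nabla U_0^l,\nabla\varphi)+U_0^l\varphi\big)dm_3$; testing against $l=l_f$, with $U_0^{l_f}(\vec{x})=\int_{R^3}e^{-\|\vec{x}-\vec{y}\|}\|\vec{x}-\vec{y}\|^{-1}f(\vec{y})\,dm_3(\vec{y})$ and $e^{-\|\vec{x}\|}/(4\pi\|\vec{x}\|)$ the fundamental solution of $-\Delta+1$ on $R^3$, again fixes the constant as $4\pi$ and yields $-\Delta U_0^l+U_0^l=4\pi l$. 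The one genuinely fiddly point throughout is this bookkeeping of the factor $4\pi$: the scalar products are normalized so that $U^l,U_0^l$ represent $l$ with no extra constant, whereas the Newtonian and Yukawa kernels $\|\vec{x}\|^{-1}$ and $e^{-\|\vec{x}\|}\|\vec{x}\|^{-1}$ carry the $4\pi$ of the fundamental solution; alternatively the constant can be read off the Fourier-side identities already established, since the symbol of $-\Delta$ is $4\pi^2\|\vec{x}\|^2$ and $\widehat{U^l}=\hat l/\|\vec{x}\|^2$ up to the chosen normalization.
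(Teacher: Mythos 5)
The paper states this theorem with no proof at all, so there is nothing to compare your argument against; it must stand on its own. Your overall route --- test against $\varphi\in C_c^\infty$, use the representing property $l(\varphi)=(U^l,\varphi)$, and integrate by parts to move the Laplacian off $\varphi$ --- is the natural one, and the weak-derivative justification of $-\int U^l\Delta\varphi\,dm_3=\int(\nabla U^l,\nabla\varphi)\,dm_3$ for $U^l\in D$ is fine, as is the density argument you sketch as a fallback.

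The genuine problem is the factor $4\pi$, and it is not merely ``fiddly bookkeeping'': your two computations prove two incompatible identities. With the inner product exactly as the paper defines it, $(f,g)=\int_{R^3}(\nabla f,\nabla g)\,dm_3$ with no $\frac{1}{4\pi}$, your first paragraph yields
\[
\langle-\Delta U^l,\varphi\rangle=\int_{R^3}(\nabla U^l,\nabla\varphi)\,dm_3=(U^l,\varphi)=l(\varphi),
\]
that is, $-\Delta U^l=l$; there is no undetermined normalizing constant left to ``pin down.'' Your second paragraph, starting instead from the explicit formula $U^{l_f}(\vec{x})=\int f(\vec{y})\|\vec{x}-\vec{y}\|^{-1}dm_3(\vec{y})$, yields $-\Delta U^{l_f}=4\pi l_f$. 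Both computations are individually correct, which shows that the paper's two descriptions of $U^l$ (the Riesz representer for $(\cdot,\cdot)$, and the Newtonian potential) differ by exactly $4\pi$ and cannot both hold; the same clash occurs for $U_0^l$ and the Yukawa kernel. A proof of the theorem as stated must commit to one coherent normalization --- either take the kernel formulas as the definition of $U^l,\,U_0^l$ and discard the duality computation, or renormalize the inner products to $\frac{1}{4\pi}\int(\nabla f,\nabla g)\,dm_3$ and $\frac{1}{4\pi}\int\bigl(fg+(\nabla f,\nabla g)\bigr)dm_3$ so that the representer really is the Newtonian (resp.\ Yukawa) potential --- rather than invoking both definitions and letting the kernel ``win.'' Once a single normalization is fixed, your integration by parts together with the identification of $e^{-\|\vec{x}\|}/(4\pi\|\vec{x}\|)$ as the fundamental solution of $-\Delta+1$ does finish the proof correctly.
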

\section{Contraction of distributions}

	Let $E\subset R^3$ be a nonempty and bounded subset of positive measure.  Note, that the characteristic function
\begin{equation*}
\chi_E(\vec{x})=1,\quad \vec{x}\in E,\quad \chi_E(\vec{x})=0,\quad \vec{x}\notin E
\end{equation*}
does not belong $D$. Therefore $l(\chi_E)$ has no meaning and so, we can not define the total 
charge concentrated on $E$. Nevertheless, it is very important to do that. 
In this section we discuss this problem.

\begin{definition}
Let $E\subset R^3$ be a nonempty and bounded subset. We say that a function   
$\varphi(\vec{x})\in D$ belongs to  $J(E)$ if for an arbitrary generalized function $l\in D^*$ 
satisfying the condition $supp(l)\subseteq E$ we have $l(\varphi)=0$. 
\end{definition}

\begin{definition}
Let $E\subset R^3$ be a nonempty and bounded subset. We say that a function   
$\varphi(\vec{x})\in D$ belongs to $I(E)$ if for an arbitrary generalized function $l\in D^*$
 satisfying the condition $supp(l)\subset \dot{E}$ we have $l(\varphi)=0$. 
\end{definition}

	Let us note, that for each nonempty and bounded subset $E\subset R^3$, the 
subspaces $I(E),\,\,J(E)$ are close in $D$ and $J(E)\subseteq I(E)$. It is possible that for some $E$ we have $J(E)\neq I(E)$.

\begin{definition}
Let $E\subset R^3$ be a nonempty and bounded subset. We denote by 
\begin{equation*}
P_{I(E)}:D\rightarrow I(E). 
\end{equation*}
the orthogonal projection on the subspace $I(E)$.
\end{definition}

\begin{definition}
Let $E\subset R^3$ be a nonempty and bounded subset. We denote by 
\begin{equation*}
P_{J(E)}:D\rightarrow J(E) 
\end{equation*}
the orthogonal projection on the subspace $J(E)$.
\end{definition}

\begin{definition}
Let $E\subset R^3$  be a nonempty and bounded subset and $l\in D^*$. 
We denote by 

\begin{equation*}
l_{I(E)}(\varphi)=l\left(\varphi-P_{I(E)}\varphi\right)
\end{equation*}
This generalized function is $I$ - contraction of $l$ on $E$.
\end{definition}

\begin{definition}
Let $E\subset R^3$  be a nonempty and bounded subset and $l\in D^*$. 
We denote by 
\begin{equation*}
l_{J(E)}(\varphi)=l\left(\varphi-P_{J(E)}\varphi\right)
\end{equation*}
This generalized function is $J$ - contraction of $l$ on $E$.
\end{definition}
	These constructions allow us to define total charge on $E$ in two manners $l_{I(E)}(1),\,\,l_{J(E)}(1)$.
Those numbers may be different.
\begin{theorem}
Let $l\in D^*$ and $E$ be a compact subset with smooth boundary 
and $supp(l)\cap \partial E=\emptyset$. Then
\begin{equation*}
l_{J(E)}(1)=-\frac{1}{4\pi}\int_{\partial E}\frac{\partial U^l(\vec{x})}{\partial \vec{n}}dm_2(\vec{x}).
\end{equation*}
\end{theorem}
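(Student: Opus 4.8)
The plan is to identify $l_{J(E)}$ concretely as a generalized function supported in $E$ and then read the asserted identity off Gauss's theorem. First I would unwind the definitions. Since every $\mu\in D^{*}$ satisfies $\mu(\varphi)=(U^{\mu},\varphi)$, a function $\varphi\in D$ lies in $J(E)$ exactly when it is orthogonal to $U^{\mu}$ for every $\mu$ with $\mathrm{supp}(\mu)\subseteq E$; hence $J(E)^{\perp}$ is the closure $\mathcal H_{E}$ of $\{U^{\mu}:\mathrm{supp}(\mu)\subseteq E\}$. Therefore $l_{J(E)}(\varphi)=l(\varphi-P_{J(E)}\varphi)=l(P_{\mathcal H_{E}}\varphi)=(U^{l},P_{\mathcal H_{E}}\varphi)=(P_{\mathcal H_{E}}U^{l},\varphi)$, so $U^{l_{J(E)}}=P_{\mathcal H_{E}}U^{l}$. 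Because the map $\mu\mapsto U^{\mu}$ is a Hilbert-space isometry of $D^{*}$ onto $D$ (one has $[\mu_{1},\mu_{2}]=(U^{\mu_{1}},U^{\mu_{2}})$) and $\{\mu\in D^{*}:\mathrm{supp}(\mu)\subseteq E\}$ is closed, $\mathcal H_{E}$ is exactly the set of potentials of generalized functions supported in $E$; in particular $l_{J(E)}$ is itself supported in $E$.

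Next I would compare $U^{l_{J(E)}}$ with $U^{l}$ near $\partial E$. Since $U^{l}-U^{l_{J(E)}}=U^{l}-P_{\mathcal H_{E}}U^{l}\in\mathcal H_{E}^{\perp}=J(E)$, and since $J(E)$ is contained in the space of functions vanishing almost everywhere on $E$ (test against $g\chi_{E}$, which lies in $D^{*}$ for every $g\in L^{2}$), the two potentials agree almost everywhere on $E$. The hypothesis $\mathrm{supp}(l)\cap\partial E=\emptyset$ makes $U^{l}$ harmonic, hence real-analytic, in a neighbourhood of $\partial E$, while $U^{l_{J(E)}}$ is harmonic near $\partial E$ from inside $E$ (its support meets $E$ only in a compact subset of $\dot E$ plus, possibly, a layer on $\partial E$); so the two functions coincide on a one-sided interior neighbourhood of $\partial E$, and their outward normal derivatives on $\partial E$ taken from inside coincide. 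For $U^{l}$ this one-sided trace is the genuine normal derivative, since $U^{l}$ is smooth across $\partial E$. Thus $\partial U^{l_{J(E)}}/\partial\vec n=\partial U^{l}/\partial\vec n$ on $\partial E$.

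Finally I would invoke Gauss's theorem. From the earlier result $-\Delta U^{l_{J(E)}}=4\pi\,l_{J(E)}$ and the fact that $l_{J(E)}$ is supported in $E$, pairing $-\Delta U^{l_{J(E)}}$ with the indicator of $E$ and applying the divergence theorem to $U^{l_{J(E)}}$ on $E$ gives $4\pi\,l_{J(E)}(1)=-\int_{\partial E}(\partial U^{l_{J(E)}}/\partial\vec n)\,dm_{2}$; substituting the boundary identity from the previous paragraph yields $l_{J(E)}(1)=-\frac{1}{4\pi}\int_{\partial E}(\partial U^{l}/\partial\vec n)\,dm_{2}$, as claimed.

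The step I expect to be the genuine obstacle is this last one, namely making the divergence theorem rigorous for the generalized function $l_{J(E)}$ and, with it, fixing the meaning of the symbol $l_{J(E)}(1)$, since the constant $1$ is not in $D$. The natural reading is $l_{J(E)}(1):=l_{J(E)}(\varphi_{0})$ for any $\varphi_{0}\in D$ with $\varphi_{0}\equiv 1$ near $E$, and one must first check this is independent of $\varphi_{0}$ (two such choices differ by an element of $J(E)$, which $l_{J(E)}$ annihilates). The delicate point is that the $J$-contraction sweeps any part of $l$ lying strictly outside $E$ onto $\partial E$, so $l_{J(E)}$ may carry a surface layer on $\partial E$ itself; then the distributional pairing $\langle-\Delta U^{l_{J(E)}},\chi_{E}\rangle$ and the one-sided boundary flux differ by precisely the mass of that layer, so the asserted equality really needs that $l$ carry no charge strictly outside $E$ (equivalently $\mathrm{supp}(l)\subseteq\dot E$), or a convention for $l_{J(E)}(1)$ excluding the $\partial E$ layer. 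Granting this, the computation above closes; along the way one also uses the routine facts that $\{\mu\in D^{*}:\mathrm{supp}(\mu)\subseteq E\}$ is closed and that two continuous functions agreeing a.e. on a one-sided neighbourhood of $\partial E$ agree there.
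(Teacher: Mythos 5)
Your argument follows the same route as the paper's own proof, which consists of a single line: apply Green's formula on $E$ together with the distributional identity $-\Delta U^l=4\pi l$, and read $\int_E\Delta U^l\,dm_3$ as $-4\pi l_{J(E)}(1)$. Everything you add beyond that --- the identification $U^{l_{J(E)}}=P_{\mathcal H_E}U^l$, the matching of the one-sided normal derivatives on $\partial E$, and the check that $l_{J(E)}(1)$ is well defined via a cutoff equal to $1$ near $E$ --- is material the paper omits, and it is correct.

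The obstacle you flag at the end is genuine, and it is a defect of the theorem as stated rather than of your argument. If $l$ carries charge strictly outside $E$, the $J$-contraction is exactly balayage onto $E$: it sweeps that charge onto $\partial E$ with nonzero total mass, while the flux integral sees only the charge genuinely inside. Concretely, take $E=\overline{B(\vec 0,1)}$ and let $l$ be the uniform distribution of total charge $q$ on the sphere $\partial B(\vec 0,2)$; this has finite energy, with potential $q\min\left(\frac{1}{||\vec x||},\frac{1}{2}\right)$, which is constant equal to $q/2$ on a neighbourhood of $E$. Hence the right-hand side of the asserted formula is $0$. But $P_{\mathcal H_E}U^l$ is the potential of the uniform charge $q/2$ spread on $\partial B(\vec 0,1)$ (the distribution supported in $E$ whose potential equals $q/2$ throughout $E$), so $l_{J(E)}(1)=q/2\neq 0$. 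The theorem therefore needs the extra hypothesis $supp(l)\subset\dot E$ (or a convention for $l_{J(E)}(1)$ that discards the swept-on surface layer), exactly as you suspected; under that hypothesis your proof closes, and the paper's silent identification of $\int_E\Delta U^l\,dm_3$ with $-4\pi l_{J(E)}(1)$ becomes legitimate.
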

\begin{proof}
By Green's formula and theorem 25 we have
\begin{equation*}
-4\pi l_{J(E)}(1)=\int_{E}\Delta U^l(\vec{x})dm_3(\vec{x})=
\int_{\partial E}\frac{\partial U^l(\vec{x})}{\partial \vec{n}}dm_2(\vec{x}).
\end{equation*}
\end{proof}
	In electrostatics, the last formula is known as Gauss theorem.
\begin{theorem}
Let $l\in D^*$ and $E$ be a compact subset with smooth boundary 
and $supp(l)\cap \partial E=\emptyset$. Then
\begin{equation*}
l_{J(E)}(1)=\frac{1}{4\pi}\int_{E}U_0^l(\vec{x})dm_3(\vec{x})-\frac{1}{4\pi}\int_{\partial E} \frac{\partial U_0^l(\vec{x})}{\partial \vec{n}}dm_2(\vec{x}).
\end{equation*}
\end{theorem}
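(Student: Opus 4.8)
The plan is to reduce this to the Gauss theorem just proved, by comparing the two potentials. Put $w=U_0^l-U^l$. Because $l$ has compact support, the field equations $-\Delta U^l=4\pi l$ and $-\Delta U_0^l+U_0^l=4\pi l$ hold in the sense of generalized functions, so subtracting them yields, as an identity of generalized functions on $R^3$,
\begin{equation*}
\Delta w=\Delta U_0^l-\Delta U^l=\left(U_0^l-4\pi l\right)-\left(-4\pi l\right)=U_0^l .
\end{equation*}
Since $U^l,\,U_0^l\in D\subset L^2(R^3)$ we have $w\in L^2(R^3)$, and now also $\Delta w=U_0^l\in L^2(R^3)$; a Parseval/Fourier argument then shows that all second order derivatives of $w$ lie in $L^2(R^3)$. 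Moreover, since $supp(l)\cap\partial E=\emptyset$, both $U^l$ and $U_0^l$, hence $w$, are $C^\infty$ in a neighbourhood of $\partial E$ (the theorem on potentials off the support applies to each of them).

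Next I would apply Green's formula on $E$. As $E$ is compact with smooth boundary and $w$ has the regularity just described,
\begin{equation*}
\int_E U_0^l\,dm_3=\int_E\Delta w\,dm_3=\int_{\partial E}\frac{\partial w}{\partial\vec n}\,dm_2=\int_{\partial E}\frac{\partial U_0^l}{\partial\vec n}\,dm_2-\int_{\partial E}\frac{\partial U^l}{\partial\vec n}\,dm_2 .
\end{equation*}
Rearranging,
\begin{equation*}
-\frac{1}{4\pi}\int_{\partial E}\frac{\partial U^l}{\partial\vec n}\,dm_2=\frac{1}{4\pi}\int_E U_0^l\,dm_3-\frac{1}{4\pi}\int_{\partial E}\frac{\partial U_0^l}{\partial\vec n}\,dm_2 ,
\end{equation*}
and by the preceding (Gauss) theorem the left hand side equals $l_{J(E)}(1)$, which is exactly the asserted formula.

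The computation is short, so there is no serious obstacle; the one point deserving care is the validity of Green's formula for $w$, i.e.\ that $w=U_0^l-U^l$ is regular enough up to $\partial E$. This is handled above: $w$ and $\Delta w$ both belong to $L^2(R^3)$, which forces all second derivatives of $w$ into $L^2(R^3)$, and near $\partial E$ the function $w$ is in fact smooth because $supp(l)$ stays away from $\partial E$, so the boundary integrals are understood classically. The only genuinely new observation is that subtracting the two field equations converts the screened potential problem into the Newtonian one, after which the already established Gauss theorem finishes the proof.
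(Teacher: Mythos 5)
Your argument is correct, and it reaches the formula by a genuinely different route from the paper. The paper proves this directly in one line: it integrates the field equation $-\Delta U_0^l+U_0^l=4\pi l$ over $E$, identifies $\int_E 4\pi l$ with $4\pi l_{J(E)}(1)$, and applies Green's formula to $U_0^l$ alone to convert $\int_E\Delta U_0^l$ into the boundary flux. You instead form $w=U_0^l-U^l$, observe that subtracting the two field equations cancels the source term and leaves the $l$-free identity $\Delta w=U_0^l$, integrate that over $E$, and then import the already-proved Gauss theorem $l_{J(E)}(1)=-\frac{1}{4\pi}\int_{\partial E}\partial U^l/\partial\vec n$ to finish. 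What your route buys is that the somewhat delicate identification of $l_{J(E)}(1)$ with an interior integral of the distribution $l$ is invoked only once, in the previously established theorem, rather than being repeated here; the present statement then appears as a consistency relation between the Newtonian and screened potentials. The price is that you must justify Green's formula for $w$ rather than for the single potential $U_0^l$, which you do correctly: $w,\Delta w\in L^2(R^3)$ gives $w\in H^2$, and smoothness of both potentials near $\partial E$ (since $supp(l)\cap\partial E=\emptyset$) makes the boundary integrals classical. Both proofs are short Green's-formula arguments; yours is a legitimate and slightly more economical variant.
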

\begin{proof}
By Green's formula and theorem 26 we have
\begin{equation*}
4\pi l_{J(E)}(1)=\int_{E}\left(U_0^l(\vec{x})-\Delta U_0^l(\vec{x})\right)dm_3(\vec{x})=
\end{equation*}
\begin{equation*}
=\int_{E}U_0^l(\vec{x})dm_3(\vec{x})-
\int_{\partial E}\frac{\partial U_0^l(\vec{x})}{\partial \vec{n}}dm_2(\vec{x}).
\end{equation*}
\end{proof}
\begin{definition}
Let $E\subset R^3$ be a compact subset and it contains a finite number of 
disjoint connected components, i. e.
\begin{equation*}
\bigcup_{k=1}^nE_k\subset E.
\end{equation*}
Let we have real numbers $q_k,\,\, k=1,2,\dots,n$. Denote by $Ch_l(E_k,q_k)$ the 
subset of generalized functions $l\in D^*$, for which
\begin{equation*}
l_{I(E_k)}(1)=q_k,\quad k=1,\dots,n.
\end{equation*}
\end{definition}
\begin{definition}
Let $E\subset R^3$ be a compact subset and it contains a finite number of 
disjoint connected components, i. e.
\begin{equation*}
\bigcup_{k=1}^nE_k\subset E.
\end{equation*}
Let we have real numbers $q_k,\,\, k=1,2,\dots,n$. Denote by $Ch_J(E_k,q_k)$ the 
subset of generalized functions $l\in D^*$, for which
\begin{equation*}
l_{J(E_k)}(1)=q_k,\quad k=1,\dots,n.
\end{equation*}
\end{definition}
\begin{theorem}
The subsets $Ch_l(E_k,q_k),\,\, Ch_J(E_k,q_k)$ are convex and close in $D^*$.
\end{theorem}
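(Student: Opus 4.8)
The plan is to read each defining relation $l_{I(E_k)}(1)=q_k$ as the statement that $l$ lies on a fixed affine subspace of the Hilbert space $D^*$; closedness and convexity of a finite intersection of such subspaces are then automatic. So the whole argument is: exhibit the $n$ constraints as the vanishing of $n$ continuous affine functionals on $D^*$.

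First I would make the functional $l\mapsto l_{I(E_k)}(1)$ precise. Since $E$ is bounded, fix once and for all a function $\psi\in D$ with $\psi\equiv 1$ on an open neighbourhood of $E$, and put $g_k:=\psi-P_{I(E_k)}\psi\in D$, so that $l_{I(E_k)}(1)=l(g_k)$. One must check that this does not depend on the choice of $\psi$. If $\psi'$ is another such function, then $\psi-\psi'$ vanishes on a neighbourhood of $E_k$, hence $m(\psi-\psi')=0$ for every $m\in D^*$ with $supp(m)\subset\dot E_k$; this locality of supports rests on the stability of $D$ under multiplication by smooth cut-off functions (cf. the product theorem for bounded functions of $D$). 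Thus $\psi-\psi'\in I(E_k)$, so $P_{I(E_k)}(\psi-\psi')=\psi-\psi'$, and therefore $g_k-g_k'=(\psi-\psi')-P_{I(E_k)}(\psi-\psi')=0$, i.e. $l(g_k)=l(g_k')$. Hence
\[
\Lambda_k:D^*\rightarrow R,\qquad \Lambda_k(l)=l_{I(E_k)}(1)=l(g_k),
\]
is well defined.

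Next, $\Lambda_k$ is linear in $l$ because $g_k$ is a fixed element of $D$, and it is bounded, since $|\Lambda_k(l)|=|l(g_k)|\le\|l\|_{D^*}\,\|g_k\|_D$; hence $\Lambda_k$ is continuous on $D^*$. Therefore each set $H_k:=\Lambda_k^{-1}(\{q_k\})$ is closed, being the preimage of a point under a continuous map, and it is an affine subspace of $D^*$. Consequently
\[
Ch_l(E_k,q_k)=\bigcap_{k=1}^{n}H_k
\]
is a finite intersection of closed affine sets, so it is closed and convex: if $l,l'\in Ch_l(E_k,q_k)$ and $0\le t\le 1$, then $\Lambda_k(tl+(1-t)l')=t\,q_k+(1-t)\,q_k=q_k$ for every $k$, whence $tl+(1-t)l'\in Ch_l(E_k,q_k)$. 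Replacing $P_{I(E_k)}$ and $I(E_k)$ by $P_{J(E_k)}$ and $J(E_k)$ throughout yields the same conclusion for $Ch_J(E_k,q_k)$.

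The only real work is the first step --- assigning a meaning to $l_{I(E_k)}(1)$ and verifying its independence of the auxiliary function $\psi$ --- and this is where the locality of generalized functions with respect to their supports is needed. Once the constraints are exhibited as values of continuous linear functionals on $D^*$, the statement reduces to the elementary facts that a level set of a bounded linear functional on a normed space is closed and affine, and that finite intersections preserve both closedness and convexity.
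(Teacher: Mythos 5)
Your proof is correct and follows essentially the same route as the paper's: the paper also observes that each constraint is a level set of the form $\{l;\ l(f)=q\}$ for a fixed $f\in D$, which is convex and closed, and that a finite intersection of such sets retains both properties. Your additional care in exhibiting the fixed element $g_k=\psi-P_{I(E_k)}\psi$ and checking independence of the cut-off $\psi$ fills in a well-definedness point that the paper's one-line proof takes for granted (since the constant function $1$ does not belong to $D$), but the underlying argument is the same.
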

\begin{proof} It is sufficient to note that for any function $f(\vec{x})\in D$ and for 
any number $q$, the subset
\begin{equation*}
\left\{l;\,\,l(f)=q,\,\, l \in D^*\right\}
\end{equation*}
is convex and is closed subset in $D^*$. The intersection of an arbitrary number of 
such subsets preserves the above mentioned two properties.
\end{proof}
\section{Equilibrium state in the finite energy model}

	In this section we prove the existence of equilibrium distribution and its 
uniqueness.  
 \begin{definition}
Let $E\subset R^3$ be a compact subset which contain a finite number of 
disjoint connected components, i. e.
\begin{equation*}
\bigcup_{k=1}^nE_k\subset E.
\end{equation*}
and for a generalized function   
\begin{equation*}
\hat{l}_I\in Ch_l(E_k,q_k)
\end{equation*}
for which
\begin{equation*}
W(\hat{l}_I)=\inf\left\{W(l);\,\,l\in Ch_I(E_k,q_k)\right\},
\end{equation*}
then $\hat{l}_I$ is called the $I$ - equilibrium distribution.
\end{definition} 
\begin{definition}
Let $E\subset R^3$ be a compact subset which contain a finite number of 
disjoint connected components, i. e.
\begin{equation*}
\bigcup_{k=1}^nE_k\subset E.
\end{equation*}
and for a generalized function   
\begin{equation*}
\hat{l}_J\in Ch_J(E_k,q_k)
\end{equation*}
for which
\begin{equation*}
W(\hat{l}_J)=\inf\left\{W(l);\,\,l\in Ch_J(E_k,q_k)\right\},
\end{equation*}
then $\hat{l}_I$ is called the $J$ - equilibrium distribution.
\end{definition} 
\begin{definition}
The equilibrium distributions $\hat{l}_I,\,\, \hat{l}_J$ are 
said to be stable if for arbitrary 
\begin{equation*}
\hat{l}_I\neq l_1\in Ch_I(E_k,q_k),\quad \hat{l}_J\neq l_2\in Ch_J(E_k,q_k)
\end{equation*}
then we have strict inequalities
\begin{equation*}
W(\hat{l}_I)< W(l_1),\quad W(\hat{l}_J)<W(l_2).
\end{equation*}
\end{definition} 
\begin{theorem}
Let $E\subset R^3$ be a compact subset and it contains a finite 
number of disjoint connected components, i. e. 
\begin{equation*}
\bigcup_{k=1}^nE_k\subset E.
\end{equation*}
Then for any real numbers $q_k,\,\, k=1,\dots,n$, there are unique 
equilibrium distributions $\hat{l}_I,\,\, \hat{l}_J$. 
\end{theorem}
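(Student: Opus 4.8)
The plan is to recognise the problem as a least-norm problem in the Hilbert space $D^{*}$. Indeed, by the definition of the scalar product on $D^{*}$ we have $[l,l]=l(U^{l})=W(l)$, so $W$ is precisely the square of the Hilbert norm $\|l\|:=[l,l]^{1/2}$; completeness of $(D^{*},[\cdot,\cdot])$ has already been established, and the analogous statement holds with $[\cdot,\cdot]_0$ and $W_0$ in place of $[\cdot,\cdot]$ and $W$, so it is enough to treat one case, say $W$ on $Ch_I(E_k,q_k)$. We already know that $Ch_I(E_k,q_k)$ and $Ch_J(E_k,q_k)$ are convex and closed in $D^{*}$. Hence, once we check that these classes are nonempty, the classical nearest-point theorem will finish the argument in one stroke. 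So the whole burden of the proof is the non-emptiness; this is the step I expect to be the real obstacle, together with the need to make rigorous sense of the symbol $l_{I(E_k)}(1)$, in which the constant function $1\notin D$ occurs.

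To handle that, I would first fix, for each $k$, a compactly supported $\psi_k\in D$ which is identically $1$ on a neighbourhood of $E_k$ and whose support is disjoint from the other components, and read $l_{I(E_k)}(1)$ as $l(g_k)$ with $g_k:=\psi_k-P_{I(E_k)}\psi_k$; this is legitimate since any two such cut-offs differ by a function vanishing near $E_k$, which lies in $I(E_k)$, so $g_k$ does not depend on the choice. Next I would identify $g_k$ as the potential $U^{\mu_k}$ of a unique generalized function $\mu_k$ supported in $E_k$, using that $I(E_k)^{\perp}$ is the closure of the set of potentials of distributions supported in $\dot E_k$ and that $l\mapsto U^{l}$ is an isometry of $D^{*}$ onto $D$. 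Here $\mu_k\neq 0$: testing $\psi_k$ against a nonzero nonnegative smooth bump supported in $\dot E_k$ gives a nonzero value, so $\psi_k\notin I(E_k)$ and $g_k\neq 0$ (here we use, as is implicit throughout, that each $E_k$ has nonempty interior). Because the supports of $\mu_1,\dots,\mu_n$ are pairwise disjoint, the functions $g_1,\dots,g_n$ are linearly independent in $D$: from $\sum c_kg_k=0$ one gets $U^{\sum c_k\mu_k}=0$, hence $\sum c_k\mu_k=0$, hence, by locality, $c_k\mu_k=0$ for each $k$, and so $c_k=0$. Consequently the bounded linear map
\begin{equation*}
\Lambda:D^{*}\to R^{n},\qquad \Lambda l=\bigl(l(g_1),\dots,l(g_n)\bigr),
\end{equation*}
has no proper range -- a nontrivial relation $\sum c_k\Lambda_k=0$ would mean $l\bigl(\sum c_kg_k\bigr)=0$ for all $l$, i.e. $\sum c_kg_k=0$ -- so $\Lambda$ is surjective, and therefore $Ch_I(E_k,q_k)=\Lambda^{-1}\{(q_1,\dots,q_n)\}$ is a nonempty closed affine, in particular convex, subset of $D^{*}$. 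The same reasoning with $J(E_k)$ replacing $I(E_k)$ gives the non-emptiness of $Ch_J(E_k,q_k)$.

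With this in place, I would close by invoking the standard Hilbert-space fact: a nonempty closed convex subset $C$ of a Hilbert space contains a unique element $l_{0}$ of least norm, and $\|l\|>\|l_{0}\|$ for every $l\in C\setminus\{l_{0}\}$ (apply the parallelogram identity to $\tfrac12(l+l_{0})\in C$). Applied to $C=Ch_I(E_k,q_k)$ with norm $W(\cdot)^{1/2}$ this produces the unique minimizer $\hat l_I$ of $W$ over $Ch_I(E_k,q_k)$ and shows it is stable in the sense of the preceding definition; applied to $C=Ch_J(E_k,q_k)$ it produces $\hat l_J$. Thus the theorem reduces entirely to the non-emptiness argument of the previous paragraph, which is where the only genuine work lies.
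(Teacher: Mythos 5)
Your proof is correct, and its core is the same as the paper's: recognise $W(l)=[l,l]$ as the squared Hilbert norm on $D^{*}$, note that $Ch_I(E_k,q_k)$ and $Ch_J(E_k,q_k)$ are closed and convex, and invoke the unique-nearest-point property of nonempty closed convex sets in a Hilbert space. The paper's own proof consists of essentially that one observation and nothing more: it writes ``let $l_1,l_2$ be the distributions where the minimum is reached'' without ever checking that the constraint sets are nonempty, and without giving a meaning to the symbol $l_{I(E_k)}(1)$, even though $1\notin D$. Your second paragraph supplies exactly these missing pieces: the cut-offs $\psi_k$ turn the constraints into well-defined conditions $l(g_k)=q_k$, the identification $g_k=U^{\mu_k}$ with $\mu_k\neq 0$ supported in $E_k$ gives linear independence of $g_1,\dots,g_n$, and hence surjectivity of $\Lambda$ and non-emptiness of the constraint sets. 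You are also right to flag that this requires each $E_k$ to have nonempty interior (otherwise $I(E_k)=D$, $g_k=0$, and $Ch_I(E_k,q_k)=\emptyset$ unless $q_k=0$), a hypothesis the theorem as stated omits. In short, your argument is the paper's argument plus the existence lemma the paper tacitly assumes; the only points to tidy are the routine locality step ($\sum c_k\mu_k=0$ implies each $c_k\mu_k=0$ requires multiplying test functions by cut-offs, which needs the paper's product theorem for bounded elements of $D$) and the observation that the theorem actually only concerns $W$, not $W_0$, so your aside about the $[\cdot,\cdot]_0$ case is harmless but not needed.
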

\begin{proof}
It sufficient to note, that the subsets $Ch_l(E_k,q_k),\,\, Ch_J(E_k,q_k)$ 
are convex and close. Let $l_1\in Ch_l(E_k,q_k),\,\, l_2\in Ch_J(E_k,q_k)$  be the 
distributions, where the minimum reach. Since they are unique, so for each 
\begin{equation*}
\hat{l}_I\neq l_1\in Ch_I(E_k,q_k),\quad \hat{l}_J\neq l_2\in Ch_J(E_k,q_k)
\end{equation*}
the strict inequalities 
\begin{equation*}
W(\hat{l}_I)< W(l_1),\quad W(\hat{l}_J)<W(l_2).
\end{equation*}
hold. 
\end{proof}
	The following question naturally arises: is it possible to prove 
that the equilibrium distributions are finite measures? The following example 
gives negative answer to this question. That is way we need to consider 
charges distributions as generalized functions.
\vskip0.5cm
{\bf Example 2.} 
Let a conductor $E$ be of the subset 
\begin{equation*}
E=\partial B(\vec{0},1)\cup B(\vec{0},r_1)\cup 
\bigcup_{n=1}^{\infty}\left(B(\vec{0},r_{2n+1})\setminus B(\vec{0},r_{2n})\right) 
\end{equation*} 
where $0<r_1<r_2<\dots<r_n<\dots <1$.

Put on $B(\vec{0},r_1)$ a charge equal $q$.
	In the equilibrium state on each sphere 
\begin{equation*}
\partial B(\vec{0},r_{n}),\quad n=2,3,\dots, 
\end{equation*}
inducts charges equal $q_n,\,\, n=1,\dots$.

	Let us note that the total charge placed on the closer of 
\begin{equation*}
B(\vec{0},r_{2n+1})\setminus B(\vec{0},r_{2n})
\end{equation*}
equals zero, i. e. $q_{2n+1}+q_{2n}=0,\,\, n=2,\dots$

	By the symmetry the charges are uniformly distributed on each sphere. Denote
\begin{equation*}
\mu_n(F)=\frac{m_2(F\cap \partial B(\vec{0},r_n)}{4\pi r_n^2},\quad n=1,2,\dots
\end{equation*}
The corresponding potential function of this measure equals
\begin{equation*}
U^{\mu_n}(\vec{x})=\frac{1}{r_n},\quad ||\vec{x}||\leq r_n
\end{equation*}
and
\begin{equation*}
U^{\mu_n}(\vec{x})=\frac{1}{||\vec{x}||},\quad r_n<||\vec{x}||.
\end{equation*}

	The potential function of the equilibrium distribution of all system 
permits the following representation
\begin{equation*}
U(\vec{x})=\sum_{n=1}^{\infty}q_nU^{\mu_n}(\vec{x}).
\end{equation*}

	Since the potential function   is constant on each component   . 
So, we have 
\begin{equation*}
U(\vec{x})=\sum_{k=1}^{\infty}q_{2k+1}U^{\mu_{2k+1}}(\vec{x})+
\sum_{k=1}^{\infty}q_{2k}U^{\mu_{2k}}(\vec{x})=
\end{equation*}
\begin{equation*}
=\sum_{k=1}^{n-1}\frac{q_{2k}}{||\vec{x}||}+
\sum_{k=n}^{\infty}\left(\frac{q_{2k+1}}{r_{2k+1}}+\frac{q_{2k}}{r_{2k}}\right),
\quad \vec{x}\in B(\vec{0},r_{2n+1})\setminus B(\vec{0},r_{2n}).
\end{equation*}
The above - mentioned conditions can be valid only if 
\begin{equation*}
\sum_{k=1}^{n-1}q_{2k}=0,\quad n=2,\dots
\end{equation*}
Finally we get 
\begin{equation*}
q_n
=(-1)^{n-1}q_1,\quad n=1,2,\dots
\end{equation*}

	From this result we conclude that the equilibrium distribution for 
a given conductor can not be a finite measure.

	Let us note, that if by thin wire we will connect the inside surfaces 
and by another thin wire we will connect the outside surfaces, then on the 
ends of those wires will arises an arbitrary big potential drop. 
\begin{theorem}
Let the conductor $E$ contain a finite number of connected 
components, i. e. 
\begin{equation*}
\bigcup_{k=1}^nE_k\subset E.
\end{equation*}
The potential function of the equilibrium distribution $l_0\in Ch_I(E_k,q_k)$ 
is constant in the interior points of each component $E_k,\,\, k=1,\dots,n$. 
\end{theorem}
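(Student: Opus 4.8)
The plan is to read the assertion inside the Hilbert space $D^{*}$. By the definitions one has $W(l)=l(U^{l})=[l,l]=\|l\|_{D^{*}}^{2}$, so the $I$-equilibrium distribution $l_{0}$, which exists and is unique by the theorem proved above, is exactly the element of $Ch_{I}(E_{k},q_{k})$ of least norm. That set is closed and convex (again by a theorem above), and in fact affine, being cut out by the linear conditions $l_{I(E_{j})}(1)=q_{j}$, $j=1,\dots ,n$; hence $l_{0}$ is the orthogonal projection of the origin onto it. The projection theorem gives $[l_{0},\,l-l_{0}]\ge 0$ for every $l\in Ch_{I}(E_{k},q_{k})$, and, replacing $l$ by $2l_{0}-l$ (which is again admissible, the set being affine), this becomes the orthogonality relation $[l_{0},h]=0$ for every increment $h=l-l_{0}$, that is, for every $h\in D^{*}$ with $h_{I(E_{j})}(1)=0$ for all $j$ (and with $supp(h)\subseteq E$, should that be part of the definition of $Ch_{I}$).

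Next I would localize inside one component. Fix $k$ and two points $\vec a,\vec b\in\dot E_{k}$; for small $\varepsilon$ take smooth nonnegative bumps $\rho_{\vec a},\rho_{\vec b}$ of total mass $1$ supported in $B(\vec a,\varepsilon),B(\vec b,\varepsilon)\subset\dot E_{k}$. The difference $h:=l_{\rho_{\vec a}}-l_{\rho_{\vec b}}$ is a compactly supported distribution of total mass $0$, so a direct estimate gives $U^{h}\in D$ (it decays like $|\vec x|^{-2}$), whence $h\in D^{*}$. This $h$ should be an admissible increment: its support lies in $E_{k}\subseteq E$, and it carries no $I$-charge on any component -- on the components $E_{j}$, $j\ne k$, because $supp(h)$ keeps a positive distance from them, and on $E_{k}$ because one has merely moved a unit of charge from one interior point to another. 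The orthogonality relation then reads $0=[l_{0},h]=h(U^{l_{0}})=\int_{\dot E_{k}}(\rho_{\vec a}-\rho_{\vec b})\,U^{l_{0}}\,dm_{3}$, and by linearity the same holds with $\rho_{\vec a}-\rho_{\vec b}$ replaced by any $g\in C_{c}^{\infty}(\dot E_{k})$ with $\int g=0$.

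From the relations $\int_{\dot E_{k}}g\,U^{l_{0}}\,dm_{3}=0$, valid for all mean-zero $g\in C_{c}^{\infty}(\dot E_{k})$, one gets that $U^{l_{0}}$ equals, almost everywhere on $\dot E_{k}$, a single constant $c$: fixing $\phi_{0}\in C_{c}^{\infty}(\dot E_{k})$ with $\int\phi_{0}=1$ and writing $g=\bigl(g-(\int g)\phi_{0}\bigr)+(\int g)\phi_{0}$ reduces every test function to a mean-zero one, while the global zero-mass normalization links the values on the different connected pieces of $\dot E_{k}$, so $U^{l_{0}}-c$ kills all test functions on $\dot E_{k}$. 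Since then $-\Delta U^{l_{0}}=0$ on $\dot E_{k}$, the distributional identity $-\Delta U^{l_{0}}=4\pi l_{0}$ forces $l_{0}$ to have no mass there, and by the local equation $-\Delta U^{l_{0}}=0$ off $supp(l_{0})$ the potential $U^{l_{0}}$ is harmonic on $\dot E_{k}$, hence literally equal to $c$ at every interior point of $E_{k}$ -- which is the claim.

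The step I expect to be the real obstacle is the admissibility claim in the second paragraph -- that a mean-zero redistribution of charge confined to $\dot E_{k}$ leaves all $n$ numbers $h_{I(E_{j})}(1)$ equal to zero. Since these are defined through the nonlocal orthogonal projections $P_{I(E_{j})}$, one must check with care that interior charge on one component cannot alter the contracted charge of another. If the components do couple, the fix is to restrict $h$ to the subfamily of interior mean-zero perturbations that simultaneously annihilate the $n$ continuous linear functionals $h\mapsto h_{I(E_{j})}(1)$: this subfamily has finite codimension in $C_{c}^{\infty}(\dot E_{k})$, hence is still ample enough to force $U^{l_{0}}$ constant by the same reasoning.
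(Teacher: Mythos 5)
Your proof is correct and follows essentially the same route as the paper: the paper perturbs the minimizer by $a(l_{\varphi_1}-l_{\varphi_2})$, where $\varphi_1,\varphi_2$ are translates of one bump into two interior balls of a single component, expands $W(l_0+a l_{\varphi_1}-a l_{\varphi_2})=W(l_0)+2a\left([l_0,l_{\varphi_1}]-[l_0,l_{\varphi_2}]\right)+O(a^2)$, and obtains a contradiction for small $a$ of the appropriate sign --- which is exactly your first-order orthogonality condition $[l_0,h]=0$ for mean-zero interior increments $h$. The admissibility point you flag as the possible obstacle (that $l_0+a l_{\varphi_1}-a l_{\varphi_2}$ remains in $Ch_I(E_k,q_k)$) is asserted in the paper without verification, so your treatment of that step is, if anything, the more careful one.
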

\begin{proof} Let $l_0\in Ch_I(E_k,q_k)$ be an equilibrium distribution. 
Suppose that the corresponding potential function is not constant, i.e. there 
is an index $k$ and there are disjoint balls 
\begin{equation*}
B(\vec{x}_1,r)\subset E_k,\quad B(\vec{x}_2,r)\subset E_k
\end{equation*}
such that the inequality 
\begin{equation*}
\int_{B(\vec{x}_1,r)}U^{l_0}(\vec{x})dm_3(\vec{x})< \int_{B(\vec{x}_2,r)}U^{l_0}(\vec{x})dm_3(\vec{x})
\end{equation*}
holds.  

	Let $\varphi(\vec{x})\geq 0\in D$ be a nonzero function such 
that $supp(\varphi)\subset B(\vec{0},r)$. Let us put
\begin{equation*}
\varphi_1(\vec{x})=\varphi(\vec{x}-\vec{x}_1),\quad \varphi_2(\vec{x})=\varphi(\vec{x}-\vec{x}_2)
\end{equation*}
Let us note that $supp(\varphi_1)\subset B(\vec{x}_1,r),\,\,\,supp(\varphi_2)\subset B(\vec{x}_2,r)$.
For arbitrary number $a$ we have 
\begin{equation*}
l_0+al_{\varphi_1}-al_{\varphi_2}\in Ch(E_k,q_k)
\end{equation*}
and the inequality
\begin{equation*}
[l_0,l_{\varphi_1}]=\int_{B(\vec{x}_1,r)}U^{l_0}(\vec{x})\varphi_1(\vec{x})dm_3(\vec{x})<
\end{equation*}
\begin{equation*}
<\int_{B(\vec{x}_2,r)}U^{l_0}(\vec{x})\varphi_2(\vec{x})dm_3(\vec{x})=[l_0,l_{\varphi_2}]
\end{equation*}
holds. Since $l_0$ be an equilibrium distribution so, we have the following inequality
\begin{equation*}
W(l_0)\leq W(l_0+al_{\varphi_1}-al_{\varphi_2})
\end{equation*}
On the other hand we have 
\begin{equation*}
W(l_0+al_{\varphi_1}-al_{\varphi_2})=
\end{equation*}
\begin{equation*}
=W(l_0)+2a([l_0,l_{\varphi_1}]-[l_0,l_{\varphi_2}])
+a^2[l_{\varphi_1},l_{\varphi_1}]-
2a^2[l_{\varphi_1},l_{\varphi_2}]+a^2[l_{\varphi_2},l_{\varphi_2}].
\end{equation*}
For sufficiently small values of the parameter $0<a$ we have 
\begin{equation*}
W(l_0+al_{\varphi_1}-al_{\varphi_2})<W(l_0).
\end{equation*}
This contradiction proves theorem.
\end{proof} 
\begin{theorem} 
	For any compact subset $E$ the corresponding equilibrium 
distribution $l_0\in Ch_I(E_k,q_k)$ has the property $supp(l_0)\subseteq \partial E$.
\end{theorem}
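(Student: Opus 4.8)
The plan is to use the characterization of the equilibrium potential $U^{l_0}$ as a constant on each interior component (Theorem 35 above) together with the fact that $U^{l_0}$ is harmonic off $\mathrm{supp}(l_0)$ and with the variational (minimizing) property of $l_0$. First I would argue that $\mathrm{supp}(l_0)\subseteq E$: if some portion of the mass sat outside $E$, one could project it onto the constraint subspaces $I(E_k)$ without changing the constrained charges $l_{I(E_k)}(1)=q_k$, and since the orthogonal projection strictly decreases the $D^*$-norm (hence the energy $W(l)=[l,l]$) unless $l_0$ already lies in the relevant closed subspace, this would contradict minimality. Concretely, $l_0$ must be the element of minimal norm in the closed convex set $Ch_I(E_1,q_1)\cap\cdots\cap Ch_I(E_n,q_n)$, and that minimal-norm element is orthogonal to every $l\in D^*$ annihilating all the functionals $\varphi\mapsto l(\varphi-P_{I(E_k)}\varphi)$; in particular $U^{l_0}$ must be ``flat'' in every direction that does not change the constraints, which forces the mass to concentrate where the constraints bite.

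Next I would pin down the support more precisely. By Theorem 34/35, $U^{l_0}$ is constant on the interior $\dot E_k$ of each component. By Theorem 24, $U^{l_0}$ is harmonic on $R^3\setminus\mathrm{supp}(l_0)$. So on the open set $\dot E\setminus\mathrm{supp}(l_0)$ the function $U^{l_0}$ is simultaneously harmonic and locally constant; that is automatic and gives nothing directly, so the real input must be the minimality. The key step is this: suppose $\mathrm{supp}(l_0)$ meets $\dot E$, i.e. there is a ball $B(\vec{x}_0,2\rho)\subset \dot E_k$ on which $l_0$ is not supported away from the center. Pick a bump $\varphi\ge 0$, $\varphi\in D$, $\mathrm{supp}(\varphi)\subset B(\vec{x}_0,\rho)$, and consider the competitor $l_0 + a\,l_\varphi - a\,l_\psi$, where $\psi$ is a bump supported near $\partial E_k$ chosen so the pair has the same $I(E_k)$-charge (both integrate against $1$ suitably, or more simply use a second interior bump as in the proof of Theorem 35 so that $\varphi-\psi$ pairs to zero with the constant). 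Since $U^{l_0}$ is constant on $\dot E_k$, the first-order term $2a([l_0,l_\varphi]-[l_0,l_\psi])$ vanishes identically, so $W(l_0+a l_\varphi - a l_\psi) = W(l_0) + a^2(\text{quadratic form in }\varphi,\psi)$. That quadratic form is positive, so this particular perturbation does not lower the energy --- meaning the interior-flatness argument alone does not produce a contradiction, and one instead has to perturb \emph{asymmetrically}: add interior mass and remove boundary mass with a sign making the (still vanishing) linear term irrelevant and exploit that $[l_\varphi,l_\varphi]$ for an interior bump can be made strictly smaller than the energy cost of carrying the same charge on the boundary.

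So the cleaner route, and the one I would actually write, is: let $l_0$ be the equilibrium (minimal-energy) distribution and let $l_0'$ be the \emph{balayage} (sweeping-out) of $l_0$ onto $\partial E$, i.e. the distribution supported on $\partial E$ whose potential agrees with $U^{l_0}$ outside $E$ and equals the harmonic extension inside. Since $U^{l_0}$ is already constant on each $\dot E_k$, its restriction to $\partial E_k$ is that same constant, and the swept measure $l_0'$ has the same component charges $q_k$ (by Theorem 30, $l_{J(E_k)}(1)$ is a flux integral of $\partial U^{l_0}/\partial\vec n$ over $\partial E_k$, unchanged by balayage, and one checks the same for the $I$-contraction). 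Then $U^{l_0'}\le U^{l_0}$ pointwise (the swept potential is dominated), with strict inequality somewhere unless $l_0$ was already supported on $\partial E$; integrating, $W(l_0') = l_0'(U^{l_0'}) \le l_0(U^{l_0}) = W(l_0)$ with strict inequality unless $\mathrm{supp}(l_0)\subseteq\partial E$. By uniqueness (Theorem 33), $l_0' = l_0$, hence $\mathrm{supp}(l_0)\subseteq\partial E$. The main obstacle is making the balayage construction rigorous in the $D^*$ framework --- one must verify that sweeping is well defined as a bounded operator on $D^*$, that it preserves each constrained charge $q_k$, and that it does not increase $W$; the energy inequality $W(l_0')\le W(l_0)$ is the classical principle of descent, and its $D^*$-version follows from the Hilbert-space structure of Theorem 19 once balayage is identified with an orthogonal projection onto the closed subspace of potentials harmonic inside $E$.
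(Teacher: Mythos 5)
There is a gap, and it is worth seeing how close you came to closing it. You have both needed ingredients in hand --- the constancy of $U^{l_0}$ on each $\dot E_k$ (the preceding theorem) and the distributional Poisson equation $-\Delta U^{l_0}=4\pi l_0$ --- but you never combine them. A constant function has vanishing distributional Laplacian, so for every $\varphi\in D$ with $supp(\varphi)\subset \dot E_k$ one has
\begin{equation*}
l_0(\varphi)=-\frac{1}{4\pi}\,\Delta U^{l_0}(\varphi)=0,
\end{equation*}
which is the entire proof the paper gives: $l_0$ charges no test function supported in the interior, hence $supp(l_0)$ avoids $\dot E$. No further appeal to minimality is needed beyond what was already spent proving constancy of the potential. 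Instead, you look at harmonicity only on $\dot E\setminus supp(l_0)$ (where it is indeed vacuous), conclude that ``the real input must be the minimality,'' run a perturbation that you yourself observe produces no contradiction, and then pivot to balayage.

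The balayage route is not wrong in spirit, but as written it does not constitute a proof. You explicitly leave unverified the three load-bearing steps: that sweeping onto $\partial E$ is well defined for signed elements of $D^*$ (classical balayage and the domination $U^{l_0'}\le U^{l_0}$ are statements about positive measures), that it preserves the $I$-contraction charges $l_{I(E_k)}(1)=q_k$, and that it does not increase $W$. Worse, the ``strict inequality unless already supported on the boundary'' step is vacuous in the present situation: since $U^{l_0}$ is already constant on each $\dot E_k$, the harmonic extension of its boundary values is that same constant, so $U^{l_0'}=U^{l_0}$ everywhere and $W(l_0')=W(l_0)$ with no strictness; the identification $l_0'=l_0$ then comes not from uniqueness of the minimizer but from equality of potentials, i.e.\ from $l=-\frac{1}{4\pi}\Delta U^{l}$ --- exactly the direct computation you bypassed. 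One fair point in your favor: you notice that something must rule out mass of $l_0$ \emph{outside} $E$, since membership in $Ch_I(E_k,q_k)$ as defined imposes only the linear charge constraints; the paper's proof is silent on this and only excludes the interior $\dot E_k$, so your projection remark flags a real loose end, though your sketch of it is not yet an argument either.
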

\begin{proof} Let $l_0\in Ch_I(E_k,q_k)$ be the equilibrium distribution. Since the potential 
function $U^{l_0}$ is constant on each component $E_k$, so for each 
$\varphi\in D,\,\,\, supp(\varphi)\subset \dot{E}_k$, we have
\begin{equation*}
\Delta U^{l_0}(\varphi)=0.
\end{equation*}
By preceding theorem we have
\begin{equation*}
\Delta U^{l_0}(\varphi)=-\frac{1}{4\pi}l_0(\varphi).
\end{equation*}
Consequently, for arbitrary $\varphi\in D$ satisfying the condition $supp(\varphi)\subseteq E_k$ we have $l_0(\varphi)=0$. 
\end{proof}
	Theorem suggests that in the finite energy model, there is an 
equilibrium distribution. It is unique and it is stable.

	The second part of the Theorem explains the Cavendish's experiment. 

\section{Forces in the finite energy model}

\begin{definition}
We say that testing function $\varphi(\vec{x})$, belongs to 
the space $V$ if $\nabla \varphi(\vec{x})\in D$, it has compact support and
\begin{equation*}
||\nabla \varphi(\vec{x})||<\infty.
\end{equation*}
\end{definition}
\begin{definition}
Let $l\in D^*$ have a compact support and its potential 
function $U^l(\vec{x})$ is bounded. The forces distribution, for a 
generalized function $l$, we define as a new generalized function $\vec{F}_l$ 
acts on testing function $\varphi(\vec{x})\in V$ as follows 
\begin{equation*}
\vec{F}_l(\varphi)=l\left(U^l\nabla \varphi\right),\quad \varphi\in V.
\end{equation*}
\end{definition}

	Note that we can determine the forces, in the finite energy model, only if 
the potential function $U^l$ is bounded. 
\begin{definition}
Let $l\in D^*$ be a generalized function and $supp(l)\subset E$. 
Let $\vec{x}_0\in E$ and $\vec{n}$ be a unit vector. We say, that at the 
point $\vec{x}_0$ vector field $\vec{F}_l$ has a 
nontrivial component depth ward $\vec{n}$, if there is a constant $0<a$ such that 
for an arbitrary $0<r$ there is a function $\varphi_0(\vec{x})\in V$ with
\begin{equation*}
supp(\varphi_0)\subset B(\vec{x}_0,r),\quad ||\varphi||+\sup_{\vec{x}}|\varphi(\vec{x})|\leq 1
\end{equation*}
and satisfying the condition
\begin{equation*}
l\left(U_l\frac{\partial \varphi_0}{\partial \vec{n}}\right)>a.
\end{equation*}
\end{definition}
\begin{theorem}
Let $E$ be compact subset. Then for equilibrium distribution $l\in D^*$ 
the forces $\vec{F}_l$ at the point $\vec{x}_0\in E$ can not have a nontrivial 
component depth ward an arbitrary for $E$ inner direction.
\end{theorem}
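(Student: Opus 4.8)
The plan is to argue by contradiction: assuming that at $\vec x_0$ the force $\vec F_l$ has a nontrivial component depth ward some inner direction $\vec n$, I would exhibit a distribution in $Ch_I(E_k,q_k)$ with strictly smaller potential energy than $l$, contradicting the uniqueness of the equilibrium distribution proved above. First I would dispose of the interior case. If $\vec x_0\in\dot E$, then by the theorem asserting $supp(l)\subseteq\partial E$ we have $supp(l)\cap B(\vec x_0,\rho)=\emptyset$ for all small $\rho$, hence $l\big(U^l\partial_{\vec n}\varphi\big)=0$ for every $\varphi\in V$ with $supp(\varphi)\subset B(\vec x_0,\rho)$, so no nontrivial component can occur. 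Thus I may assume $\vec x_0\in\partial E$; since $E$ has finitely many components, a small neighbourhood of $\vec x_0$ meets $E$ only in one component $E_j$, and by the constancy theorem $U^l\equiv c_j$ on $\dot E_j$, a value which persists on $\overline{E_j}$ by continuity of $U^l$.

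Now suppose the hypothesis fails, so there is $a>0$ such that for every $\rho>0$ there is $\varphi_0\in V$ with $supp(\varphi_0)\subset B(\vec x_0,\rho)$, $\|\varphi_0\|+\sup|\varphi_0|\le1$, and $l\big(U^l\partial_{\vec n}\varphi_0\big)>a$. Using that $\vec n$ is an inner direction, I would fix $0<\rho<r<t$, all as small as needed, with $B(\vec x_0+t\vec n,r)\subset E_j$, and take $\varphi_0$ of support radius $<\rho$ (replacing it by a smooth approximant with the same normalization, which is legitimate because $\psi\mapsto l\big(U^l\partial_{\vec n}\psi\big)$ is continuous on $V$, so the strict inequality survives). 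With $\Psi_s(\vec x)=\vec x+s\varphi_0(\vec x)\vec n$, a diffeomorphism for small $s>0$, set $l_s(\psi)=l\big(\psi\circ\Psi_s\big)$: this rigidly transports the part of the charge near $\vec x_0$ a short distance in the direction of the force, i.e. toward the interior ball. Then I would add to $l_s$ a finite correction supported inside $B(\vec x_0+t\vec n,r)$ — a finite combination of point masses there, regularized to lie in $D^{*}$ — chosen so that $(l_s)_{I(E_k)}(1)=q_k$ for every $k$; this is possible because that ball lies in $\dot E_j$ and is disjoint from every other $\overline{E_k}$, so the $n$ functionals $\mu\mapsto\mu\big(1-P_{I(E_k)}1\big)$ are independent on measures carried by it. It is precisely here that the inner direction is essential: it supplies room inside $E_j$ near $\vec x_0$ to relocate the transported charge and to place the compensating charge.

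To conclude I would expand $W(l_s)=W(l)+2[l,l_s-l]+\|l_s-l\|_{D^*}^2$. Strong continuity of the translation semigroup, hence of its adjoint, on the Hilbert space $D^{*}$ gives $\|l_s-l\|_{D^{*}}\to0$ as $s\to0$, so the quadratic term is $o(1)$. For the cross term, $[l,l_s-l]=l\big(U^l\circ\Psi_s-U^l\big)$ is an integral of $l$ against a function supported near $\vec x_0$ that records the change of $U^l$ along the displacement $s\varphi_0\vec n$; using $U^l\equiv c_j$ on $\overline{E_j}$, the behavior of $U^l$ just off $\partial E_j$, and the bound $l\big(U^l\partial_{\vec n}\varphi_0\big)>a$ valid at \emph{every} scale $\rho$, I would obtain $[l,l_s-l]\le -cs$ with a fixed $c>0$ for all small $s$, whence $W(l_s)<W(l)$ for small $s>0$ — contradicting uniqueness of the equilibrium distribution.

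The main obstacle is exactly this last estimate. The family $s\mapsto l_s$ is only continuous into $D^{*}$, not differentiable, and the naive first variation of the quadratic form $W$ at its minimizer $l$ must vanish; moreover $\nabla U^l$ jumps across $supp(l)$, which is why the force is \emph{defined} through $l\big(U^l\nabla\varphi\big)$ rather than $l\big(\varphi\nabla U^l\big)$. The genuine content of the theorem is that an equilibrium distribution cannot be so singular near a boundary point admitting an inner direction as to sustain a scale-invariant inward force, and converting the uniform lower bound $l\big(U^l\partial_{\vec n}\varphi_0\big)>a$ into the strict energy decrease above — with the support radius, the transport length, and the correction term kept simultaneously under control — is the heart of the argument; the bookkeeping with the subspaces $I(E_k)$ is a secondary issue handled by the finite correction.
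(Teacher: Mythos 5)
Your overall strategy --- perturb the equilibrium distribution by transporting a little charge a distance $t$ in the inner direction $\vec n$ and show the energy drops at a rate proportional to the assumed force component --- is the same as the paper's, and your preliminary reductions (interior points via $supp(l)\subseteq\partial E$, constancy of $U^l$ on the component) are reasonable. But there is a genuine gap, and you have located it yourself: you never establish the first-order estimate $[l,l_s-l]\le -cs$. With your pushforward $l_s(\psi)=l(\psi\circ\Psi_s)$ the cross term is $(l_s-l)(U^l)=l\left(U^l\circ\Psi_s-U^l\right)$, and linearizing it requires pairing $l$ with $\varphi_0\,(\nabla U^l,\vec n)$; as you yourself observe, $\nabla U^l$ is exactly the object that is not defined on $supp(l)$, so this route cannot be converted into the hypothesis $l\left(U^l\,\partial\varphi_0/\partial\vec n\right)>a$. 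The paper sidesteps this by choosing a different perturbation: it splits $l(\psi)=l(\psi\varphi)+l(\psi(1-\varphi))$ and replaces the first piece by $l(\psi\varphi_t)$ with $\varphi_t(\vec x)=\varphi(\vec x-t\vec n)$, i.e.\ it shifts the \emph{cutoff}, not the charge. Then the linear term in $W(l_t)-W(l)$ is $2l\left((\varphi_t-\varphi)U^l\right)=-2t\,l\left(U^l\,\partial\varphi/\partial\vec n\right)+o(t)<-2ta+o(t)$, which is precisely (up to sign and a factor $2$) the quantity the hypothesis bounds from below; the contradiction with minimality follows once one checks $l_t\in Ch_I(E_k,q_k)$ and that the quadratic term is $o(t)$. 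The definition of the force as $l(U^l\nabla\varphi)$ is the hint you missed: the perturbation must be dual to the test function, so that the derivative always lands on $\varphi$ and never on $U^l$.

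Two smaller points. Your worry that the first variation at the minimizer must vanish is misplaced: the constraint set is convex, the admissible variations form a cone, and the argument only needs one admissible one-sided family $t\mapsto l_t$, $t>0$, along which the energy strictly decreases to first order. And your auxiliary correction by ``regularized point masses'' to restore $(l_s)_{I(E_k)}(1)=q_k$ introduces terms whose size, energy, and cross terms you do not control (genuine Dirac masses in $R^3$ are not even in $D^*$, since $1/||\vec x||$ fails to have finite Dirichlet energy near the origin); the cutoff-shift perturbation is designed so that the support stays inside $E$ and the contracted charges are unchanged, making any such correction unnecessary.
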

\begin{proof}  Let $l\in D^*$ be an equilibrium distribution. Let $\vec{n}$ be a unit
inner vector for $E$ at the point $\vec{x}_0$, i.e. for each positive number $0<\epsilon$ 
there are $0<r<t<\epsilon$ such that
\begin{equation*}
B(\vec{x}_0+t\vec{n},r)\subset E
\end{equation*}

	Let us assume, that at the point $\vec{x}_0$ the force $\vec{F}_l$, 
has a nontrivial component depth ward $\vec{n}$. This follows that there 
is a constant $a>0$ such that for an 
arbitrary $r>0$ there is a function $\varphi_0(\vec{x})\in V$ with
\begin{equation*}
supp(\varphi_0)\subset B(\vec{x}_0,r),\quad ||\varphi_0||+\sup_{\vec{x}}|\varphi_0(\vec{x})|\leq 1
\end{equation*}
and satisfying the condition
\begin{equation*}
l\left(U_l\frac{\partial \varphi_0}{\partial \vec{n}}\right)>a.
\end{equation*}

	The generalized function $l$ permits the following representation 
\begin{equation*}
l(\psi)=l(\psi\varphi)+l(\psi(1-\varphi)).
\end{equation*}

	Let us introduce new generalized function acting on each test 
function   as follows
\begin{equation*}
l_t(\psi)=l(\psi\varphi_t)+l(\psi(1-\varphi)).
\end{equation*}
where 
\begin{equation*}
\varphi_t(\vec{x})=\varphi(\vec{x}-t\vec{n}).
\end{equation*}
We have
\begin{equation*}
\frac{\partial \varphi_t(\vec{x})}{\partial \vec{n}}=-\frac{\partial \varphi_t(\vec{x})}{\partial t},
\end{equation*}
where the left hand side is the derivative of the function $\varphi_t(\vec{x})$ by  the 
variable $\vec{x}$ in direction $\vec{n}$.

	Since 
\begin{equation*}
B(\vec{x}_0+t\vec{n},r)\subset E
\end{equation*}
so we have $l_t\in Ch_I(E_k,q_k)$. Consequently, $W(l)<W(l_t)$ and
\begin{equation*}
U^l(\vec{x})=const,\quad \vec{x}\in \dot{E}.
\end{equation*}
We have
\begin{equation*}
l_t(\psi)=l(\psi)+l((\varphi_t-\varphi\psi)=l(\psi)+\tilde{I}(\psi).
\end{equation*}
So, taking into account the condition $U^l(\vec{x})=const,\,\,\,\vec{x}\in \dot{E}$ we have
\begin{equation*}
W(l_t)-W(l)=l_t(U^{l_t})-l(U^{l})=2l\left((\varphi_t-\varphi)U^l\right)+
2l\left((\varphi_t-\varphi)U^{\tilde{l}}\right)=
\end{equation*}
\begin{equation*}
=2\tilde{l}(U^{l_t})+\tilde{l}(U^{\tilde{l}})=2l\left((\varphi_t-\varphi)U^l\right)
+l\left((\varphi_t-\varphi)U^{\tilde l}\right)=
\end{equation*}
\begin{equation*}
=2l\left((\varphi_t-\varphi)U^l\right)
+l\left((\varphi_t-\varphi)^2U^l\right)=
-2tl\left(U^l\frac{\partial \varphi}{\partial \vec{n}}\right)+o(t)<0,\quad t\to +0.
\end{equation*}
The getting inequality contradicts our chose of $l\in D^*$ to be 
an equilibrium distribution.
\end{proof}
\begin{definition}
Let $E$ be a compact subset. We say, that a distribution $l\in D^*$ 
with $supp(l)\subset E$ is in a static state, if the force $\vec{F}$ has no inner 
direction at each point $\vec{x}\in E$.
\end{definition}
	In particularly, for each test function $\varphi$, satisfying the 
condition $supp(\varphi)\subset \dot{E}$, we have
\begin{equation*}
\vec{F}(\varphi)=0.
\end{equation*}
This follows that there is a constant number $C$ such that if $supp(\varphi)\subset \dot{E}$, then
\begin{equation*}
l(U^l\psi)=Cl_1(\psi).
\end{equation*}

	From the given bellow two examples follow that static state is not unique.
\vskip 0.5cm
	{\bf Example 3.} Let we have the distribution
\begin{equation*}
l(\varphi)=\frac{Q}{4\pi r_0^2}\int_{\partial B(\vec{0},r_0)}\varphi(\vec{x})dm_2(\vec{x}).
\end{equation*}
The total charge equals $l(1)=Q$ and the corresponding potential function is
\begin{equation*}
U^l(\vec{x})=\frac{Q}{r_0},\quad ||\vec{x}||\leq r_0,
\end{equation*}
\begin{equation*}
U^l(\vec{x})=\frac{Q}{||\vec{x}||},\quad r_0<||\vec{x}||,
\end{equation*}
This is static state.

\section{Equilibrium distribution on two balls}
 
	In this section we determine Kelvin's transform and some of its important 
and useful property, see \cite {b:J}.
\begin{definition}
Let $0<||\vec{x}-\vec{x}_0||<R$. Let 
\begin{equation*}
\vec{y}=\vec{x}_0+\frac{R^2}{||\vec{x}-\vec{x}_0||}(\vec{x}-\vec{x}_0)
\end{equation*}
be the Kelvin,s transform of the point $\vec{x}$ with respect to the sphere $\partial B(\vec{x}_0,R)$. 
\end{definition}
	Note, that in inverse transform, all points situated on the sphere $\partial B(\vec{x}_0,R)$, 
remain stationary and the points $\vec{x}_0,\,\,\vec{x}\,\,\vec{y}$ lie on a straight line.

	Furder, we have  
\begin{equation*}
||\vec{x}-\vec{x}_0||||\vec{y}-\vec{x}_0||=R^2
\end{equation*}

	It is well known the following property of Kelvin's transform.

	Let $0<||\vec{x}-\vec{x}_0||<R$, and $\vec{y}$ be the Kelvin's transform of $\vec{x}$ 
with respect to the sphere $\partial B(\vec{x}_0,R)$. Then for an arbitrary 
point $\vec{z}\in\partial B(\vec{x}_0,R)$ we have the following equality
\begin{equation*}
\frac{1}{||\vec{z}-\vec{y}||}=\frac{||\vec{x}-\vec{x}_0||}{R}\frac{1}{||\vec{z}-\vec{x}||}
\end{equation*}

	Let us consider the conductor of the form $E=B(\vec{x}_0,R)\cup B(\vec{y}_0,r)$. 
The first ball $B(\vec{x}_0,R)$ has a positive charge equals $Q$ and the second 
ball $B(\vec{y}_0,r)$ has a positive charge $r$.

	Let us denote  
\begin{equation*}
d=||\vec{y}_0-\vec{x}_0||-r-R>0
\end{equation*}
We want to find the equilibrium distribution. Denote by
\begin{equation*}
\vec{x}_1=\vec{x}_0+(\vec{y}_0-\vec{x}_0)\frac{R^2}{||\vec{y}_0-\vec{x}_0||^2}
\end{equation*}
the point symmetric to $\vec{y}_0$, with respect to the ball $B(\vec{x}_0,R)$. 

	Denote by
\begin{equation*}
\vec{y}_1=\vec{y}_0+(\vec{x}_0-\vec{y}_0)\frac{r^2}{||\vec{y}_0-\vec{x}_0||^2}
\end{equation*}

the point symmetric to $\vec{x}_0$ with respect to the ball $B(\vec{y}_0,r)$.

	Similarly, by induction we define the points $\vec{x}_n$ symmetric to $\vec{y}_{n-1}$ with 
respect to the ball $B(\vec{x}_0,R)$ 
\begin{equation*}
\vec{x}_n=\vec{x}_0+(\vec{y}_{n-1}-\vec{x}_0)\frac{R^2}{||\vec{y}_{n-1}-\vec{x}_0||^2},\quad n=1,2,\dots
\end{equation*}
and the points $\vec{y}_n$ symmetric to $\vec{x}_{n-1}$, with respect to the ball $B(\vec{y}_0,r)$, i.e.
\begin{equation*}
\vec{y}_n=\vec{y}_0+(\vec{x}_{n-1}-\vec{x}_0)\frac{r^2}{||\vec{x}_{n-1}-\vec{x}_0||^2},\quad n=1,2,\dots
\end{equation*}

	Note that
\begin{equation*}
||\vec{x}_n-\vec{x}_0||=\frac{R^2}{||\vec{y}_{n-1}-\vec{x}_0||},\quad n=1,2,\dots
\end{equation*}
and
\begin{equation*}
||\vec{y}_n-\vec{y}_0||=\frac{r^2}{||\vec{x}_{n-1}-\vec{y}_0||},\quad n=1,2,\dots
\end{equation*}

	All points $\vec{x}_0,\,\,\vec{x}_1\dots$ lie inside the ball $B(\vec{x}_0,R)$ and all 
points $\vec{y}_0,\,\,\vec{y}_1\dots$ lie in the ball $B(\vec{y}_0,r)$.
 
	Since for any $n=0,1,\dots$ the points $\vec{x}_{n+1},\,\,\,\vec{y}_{n}$  are symmetric 
with respect to the ball $B(\vec{x}_0,R)$, so
\begin{equation*}
\frac{1}{||\vec{x}-\vec{y}_n||}=\frac{||\vec{x}_{n+1}-\vec{x}_0||}{R}\frac{1}{||\vec{x}_{n+1}-\vec{x}||},\quad ||\vec{x}-\vec{x}_0||=R.
\end{equation*}
Since for any $n=0,1,\dots$ the points $\vec{x}_{n},\,\,\,\vec{y}_{n+1}$ are symmetric 
with respect to of the ball $B(\vec{y}_0,r)$, so
\begin{equation*}
\frac{1}{||\vec{x}-\vec{x}_n||}=\frac{||\vec{y}_{n+1}-\vec{y}_0||}{R}\frac{1}{||\vec{y}_{n+1}-\vec{x}||},\quad ||\vec{x}-\vec{y}_0||=r.
\end{equation*}
In figure 4 we show only four points.

	Define the potential function outside of conductor $\Omega=R^3\setminus E$ in the 
following form
\begin{equation*}
U(\vec{x})=\frac{C}{||\vec{x}-\vec{x}_0||}+\sum_{n=0}^{\infty}C_n
\left(\frac{1}{||\vec{x}-\vec{y}_n||}-\frac{||\vec{x}_{n+1}-\vec{x}_0||}{R}\frac{1}{||\vec{x}-\vec{x}_{n+1}||}\right)
\end{equation*}
Note that 
\begin{equation*}
U(\vec{x})=\frac{C}{R},\quad ||\vec{x}-\vec{x}_0||=R.
\end{equation*}
The total charge, placed inside the ball $B(\vec{y}_0,r)$ equals
\begin{equation*}
q=\sum_{n=0}^{\infty}C_n.
\end{equation*}

	We assume, that the same potential function permits the following 
representation, too 
\begin{equation*}
U(\vec{x})=\frac{D}{||\vec{x}-\vec{y}_0||}+\sum_{n=0}^{\infty}D_n
\left(\frac{1}{||\vec{x}-\vec{x}_n||}-\frac{||\vec{y}_{n+1}-\vec{y}_0||}{r}\frac{1}{||\vec{x}-\vec{y}_{n+1}||}\right)
\end{equation*}
Note that 
\begin{equation*}
U(\vec{x})=\frac{D}{r},\quad ||\vec{x}-\vec{y}_0||=r.
\end{equation*}
Inside the ball $B(\vec{x}_0,R)$ we have the charge 
\begin{equation*}
Q=\sum_{n=0}^{\infty}D_n.
\end{equation*}
The same potential function $U(\vec{x}$ permits both of the above mentioned 
representations, if
\begin{equation*}
\frac{C-D_0}{||\vec{x}-\vec{x}_0||}-\sum_{n=1}^{\infty} \frac{1}{||\vec{x}-\vec{x}_n||}
\left(D_n+C_{n-1}\frac{||\vec{x}_0-\vec{x}_n||}{R}\right)=
\end{equation*}
\begin{equation*}
=\frac{D-C_0}{||\vec{x}-\vec{y}_0||}-\sum_{n=1}^{\infty} \frac{1}{||\vec{x}-\vec{y}_n||}
\left(C_n+D_{n-1}\frac{||\vec{y}_0-\vec{y}_n||}{r}\right)
\end{equation*}
These equalities are valid if $D_0=C,\,\,\, C_0=D$ and
\begin{equation*}
D_n=-C_{n-1}\frac{||\vec{x}_n-\vec{x}_0||}{R},\quad C_n=-D_{n-1}\frac{||\vec{y}_n-\vec{y}_0||}{R},\quad n=1,2,\dots     
\end{equation*}
Denote
\begin{equation*}
D_n=C\hat{D}_n,\quad C_n=D\hat{C}_n,
\end{equation*}
where $\hat{D}_n,\quad \hat{C}_n$ do not depend on the parameters $D,\,\,C$. Consequently, we get the 
following equations
\begin{equation*}
Q=\frac{C}{2}\left(\sum_{n=0}^{\infty}(1+(-1)^n)\hat{D}_n\right)+\frac{D}{2}
\left(\sum_{n=0}^{\infty}(1-(-1)^n)\hat{D}_n\right)=A_{11}C+A_{12}D
\end{equation*}
\begin{equation*}
q=\frac{C}{2}\left(\sum_{n=0}^{\infty}(1-(-1)^n)\hat{C}_n\right)+\frac{D}{2}
\left(\sum_{n=0}^{\infty}(1+(-1)^n)\hat{C}_n\right)=A_{21}C+A_{22}D
\end{equation*}
So, we have
\begin{equation*}
C=\frac{A_{22}Q-A_{12}q}{A_{11}A_{22}-A_{12}A_{21}},\quad D=\frac{A_{11}q-A_{21}Q}{A_{11}A_{22}-A_{12}A_{21}}
\end{equation*}

\section{Jagged effect}
  
	Consider the case $R=r$ and $Q=q$. The potential function on the segment
\begin{equation*}
I=\left\{t\vec{x}_0+(1-t)\vec{y}_0;\,\,\,1-\frac{r}{||\vec{x}_0-\vec{y}_0||}\leq t\leq \frac{r}{||\vec{x}_0-\vec{y}_0||}\right\}
\end{equation*}
is outside of balls. The potential function on   is represented in the 
following picture
 
\begin{figure}[tbp]
\begin{center}
\includegraphics[scale = 0.8]{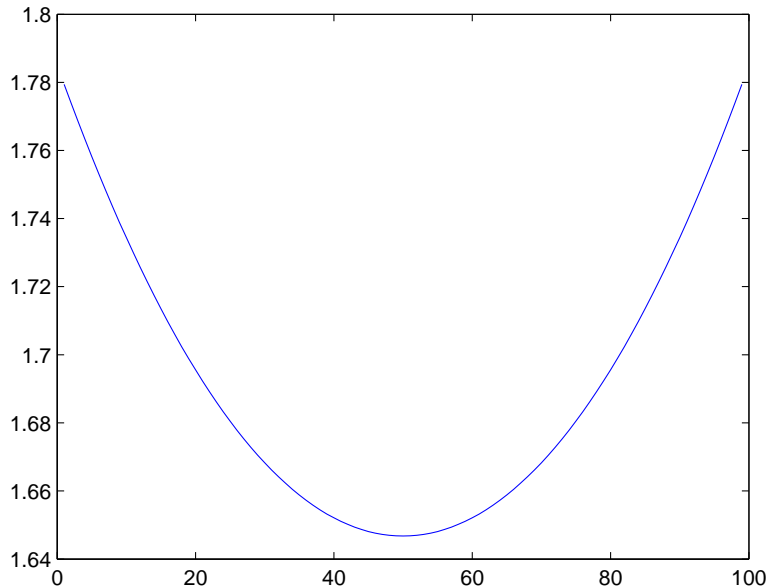}
\end{center}
\caption{The typical form of the potential function on the line between balls.}
\end{figure}

In the figure 2. the oscillation  
\begin{equation*}
E(d)=\max_{\vec{x},\vec{y}\in I}|U(\vec{x})-U(\vec{y})|
\end{equation*}
as a function of the distance $d$ between ball centers  
\begin{equation*}
||\vec{x}_0-\vec{y}_0||=d+2r>2r,
\end{equation*}
is presented.

\begin{figure}[tbp]
\begin{center}
\includegraphics[scale = 0.8]{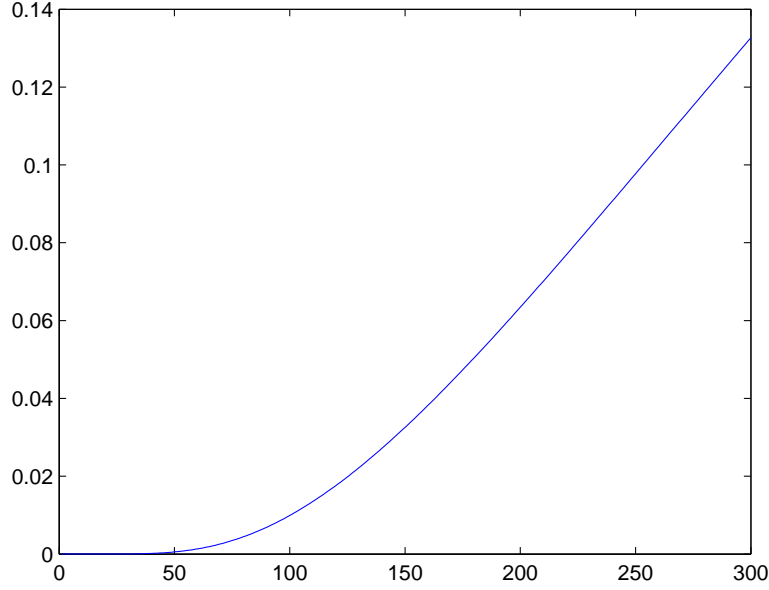}
\end{center}
\caption{The oscillation of potential function for two balls as a function 
of its distance.}
\end{figure}

	We see that on the line which connect the nodes centers the potential 
function has small oscillation. Let us consider two connected 
subsets $E_1,\,\,\,E_2$ which are situated inside the disjoint balls
\begin{equation*}
E_1\subset B(\vec{x}_1,r),\quad E_2\subset B(\vec{x}_2,r).
\end{equation*}
We put the same charges on these subsets. In this section we prove that 
the shape of boundaries  $\partial E_1,\,\,\,\partial E_2$ play an essential role. 

\begin{figure}[tbp]
\begin{center}
\includegraphics[scale = 0.8]{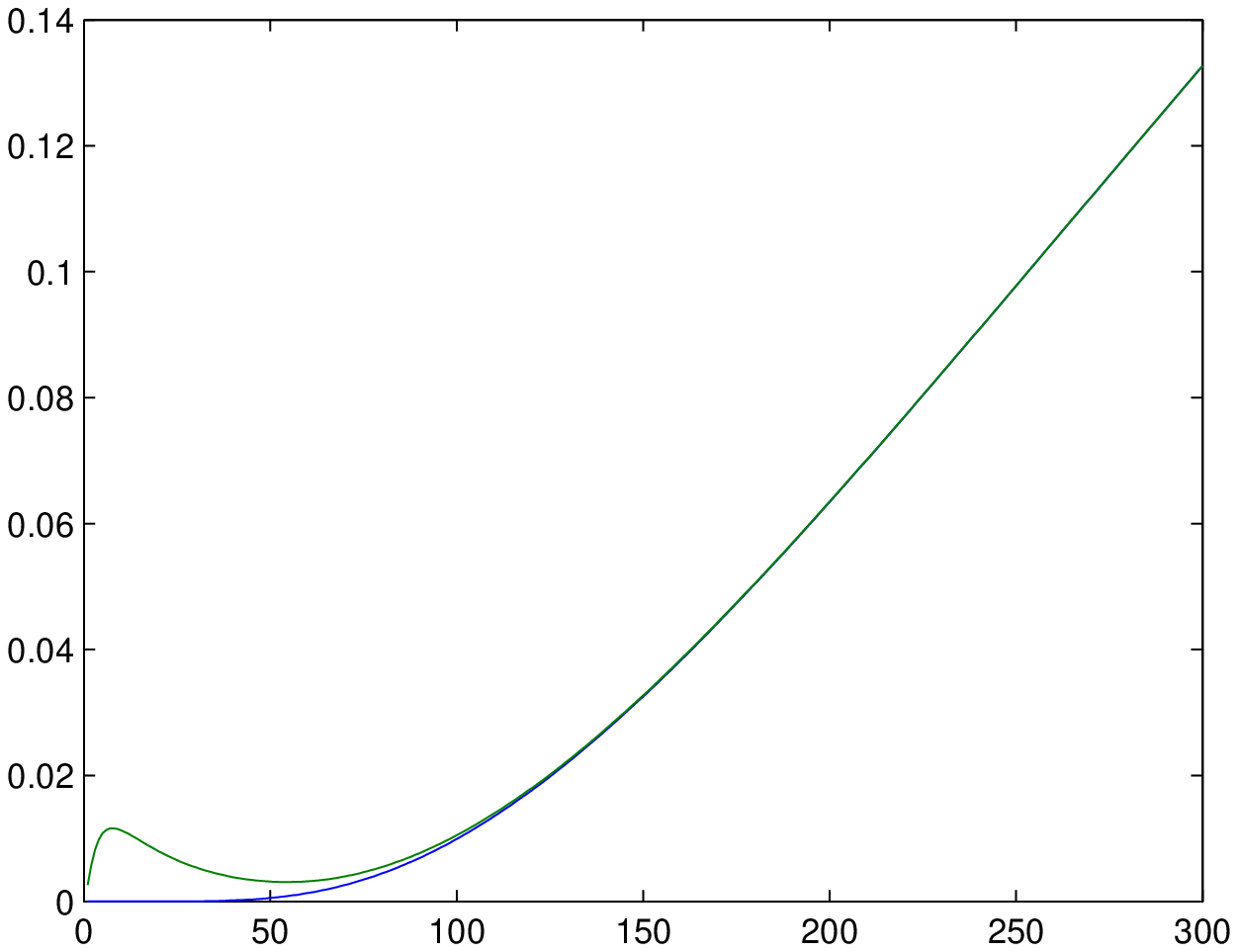}
\end{center}
\caption{The possible values of oscillation the potential function for 
two component conductors for different shapes.}
\end{figure}

\begin{figure}[tbp]
\begin{center}
\includegraphics[scale = 0.4]{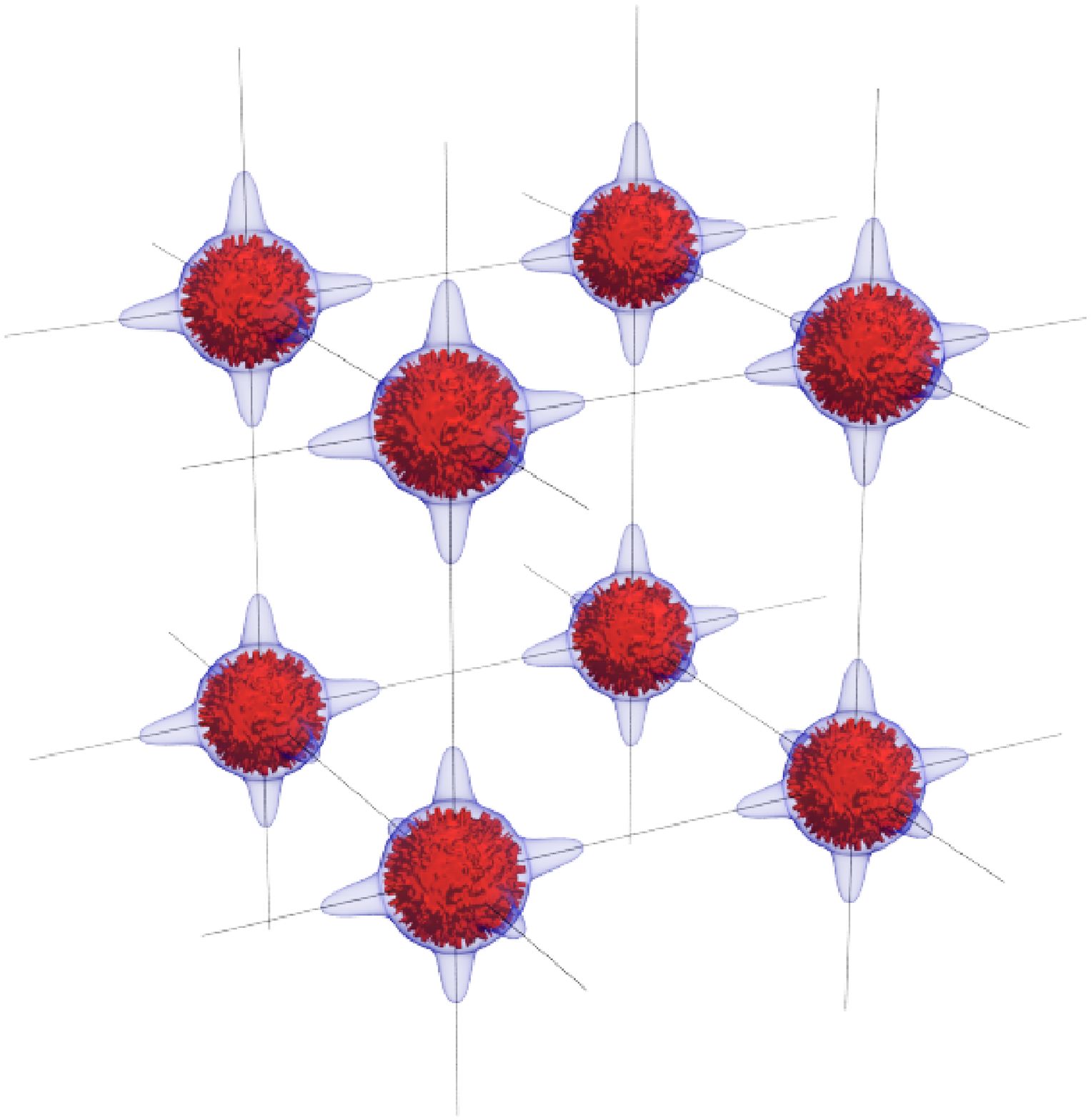}
\end{center}
\caption{The nodes in height temperature.}
\end{figure}

In height temperature, the nodes lose the ideal form and as a result 
potential function has big oscillation. So, the equipotent surface of 
potential function can not go far  from nodes and  connect the neighborhood 
placed nodes. This result we name jagged effect. In the next figure we 
present a typical situation.

	Let the same positive charges $q$ be placed at the points
\begin{equation*}
\vec{x}_{k,j}=
(r\sin\varphi_k\cos\theta_{k,j},\,\,r\sin\varphi_k\sin\theta_{k,j},\,\,r\cos \varphi_k)\in \partial B(\vec{0},r),
\end{equation*}
where 
\begin{equation*}
\varphi_k=\frac{\pi}{2}+\frac{\pi k}{2n},\quad k=-n,\dots,-1,0,1,\dots,n,
\end{equation*}
and
\begin{equation*}
\theta_{k,j}=\frac{2\pi j}{n\sin \varphi_k},\quad j=1,\dots,[n\sin\varphi_k].
\end{equation*}
We have $N=n^2$ separated points.  Moreover, we have 
\begin{equation*}
||\vec{x}_{k,i}-\vec{x}_{k,j}||^2=
r^2\sin^2\varphi_k\left[\left(\cos\theta_{k,i}-\cos\theta_{k,j}\right)^2+\left(\sin\theta_{k,i}-\sin\theta_{k,j}\right)^2\right]=
\end{equation*}
\begin{equation*}
=4r^2(\sin^2\varphi_k)\sin^2\left(\frac{2\pi}{n\sin\varphi_k}\right)\approx \frac{8\pi r^2}{n^2}\approx\frac{8\pi r^2}{N}.
\end{equation*}
and
\begin{equation*}
||\vec{x}_{k,j}-\vec{x}_{m,j}||^2=
r^2\left(\cos\varphi_k-\cos\varphi_m\right)^2+r^2\left(\sin\varphi_k-\sin\varphi_m\right)^2\geq \frac{r^2}{n^2}\approx\frac{r^2}{N}
\end{equation*}
Consequently, for different indexes we have
\begin{equation*}
||\vec{x}_{k,j}-\vec{x}_{m,j}||\geq \frac{r}{\sqrt N}
\end{equation*}

	It is easy to verify that we have the inclusion 
\begin{equation*}
\partial B(\vec{0},r)\subset \bigcup_{k,j}B\left(\vec{x}_{k,j},\frac{4r}{\sqrt N}\right)
\end{equation*}

	Let us put the same positive charges $q$ at the points 
\begin{equation*}
\vec{d}+\vec{x}_{k,j}\in \partial B(\vec{d},r)
\end{equation*}
where $||\vec{d}||>r$. Denote the set
\begin{equation*}
E(N)=\left\{\vec{x};\quad \sqrt N \leq \sum_{k,j}\frac{q}{||\vec{x}_{k,j}-\vec{x}||}+
\sum_{k,j}\frac{q}{||\vec{x}_{k,j}-\vec{d}-\vec{x}||}\right\}.
\end{equation*}

	The subset $E(N)$ consists of three connected components. One of those 
components is unbounded. Denote by $G(N)$ that unbounded component. We have 
the following representation
\begin{equation*}
R^3\setminus G(N)=E\cup F,
\end{equation*}
where $E,\,\, F$ are disjoint connected subsets. For concreteness, suppose 
that $\vec{0}\in E$ and $\vec{d}\in F$. 

Let 
\begin{equation*}
U(\vec{x})=\sqrt N,\quad \vec{x}\notin G(N)
\end{equation*}
and
\begin{equation*}
U(\vec{x})=\sum_{k,j}\frac{q}{||\vec{x}_{k,j}-\vec{x}||}+
\sum_{k,j}\frac{q}{||\vec{x}_{k,j}-\vec{d}-\vec{x}||},\quad \vec{x}\in G(N).
\end{equation*}

	Note that $U(\vec{x})$ is the potential function of the equilibrium distribution 
of the charge $qN$ placed on $E$ and the same charge placed on $F$. 

	It is easy to see that the pieces $E$ and $F$ repel each other because 
they contain only positive charges.

	For any $\epsilon>0$ you can select a number $N$ so large that 
\begin{equation*}
B(\vec{0},r-\epsilon)\subset E_1\subset B(\vec{0},r+\epsilon),\quad 
B(\vec{d},r-\epsilon)\subset E_2\subset B(\vec{d},r+\epsilon).
\end{equation*}

\section{Basic experimental facts on superconductivity}

	Superconductivity is one of the most fascinating chapters of modern physics. 

	During the past century, enormous number of experimental results where gathered. 
Below we present only those, which we can explain in frame of 
suggested in this paper new model.

1.	The existence of critical temperature. 

	In 1911 K. Ones discovered that at a critical 
low temperature the resistance of Hg suddenly falls to zero. 

	More accurate experiments give the value $10^{-24}$ Ohm 
for resistance of Hg at the critical temperature 4T, and the value $10^{-9}$ Ohm 
at the temperature 4.2T.
 
	Now, this phenomenon is known as superconductivity.

	The property of superconductivity was observed for the 
following metals: Al,  Cd,  Ga,  Hf,  Hg,  In,  Ir,  La,  Mo,  Mb,  Os,  
Pa,  Pb,  Re,  Ru,  Sn,  Ta,  Tc,  Th,  Ti,  Tl,  U,  V,  W,  Zn,  Zr. 
Later one discovers this effect for some alloys and ceramic materials, too.
 
2.	A small increasing of the resistance, before the critical temperature. 

	For some materials, near the critical temperature, the resistance suddenly 
increases a little and reaching some maximum value quickly drops to zero, 
see [22], p.436, see picture 5. 

\begin{figure}[tbp]
\begin{center}
\includegraphics[scale = 0.7]{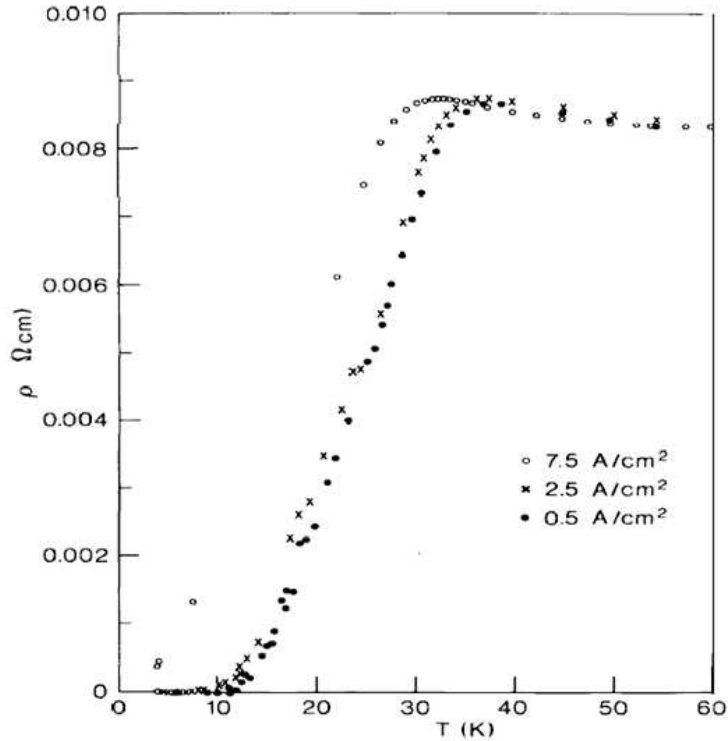}
\end{center}
\caption{Low temperature resistive of a sample recording for different current densities. From \cite {b:B}.}
\end{figure}

3.	The role of pressure. 

	For the following elements: 
Be, Cr, Ba, Si, Ge,  Se, Sb, Te, Bi the superconductivity 
property was observed only in condition of the huge pressure.
 
4.	The role of magnetic field.

	Experimentally it was verified that the 
superconductivity is destroyed in sufficiently strong magnetic field.

5. The role of current's magnitude.	

Superconductivity is destroyed also, 
when the current is greater of some critical value. This result is known 
as Silcbay effect.

\begin{figure}[tbp]
\begin{center}
\includegraphics[scale = 0.4]{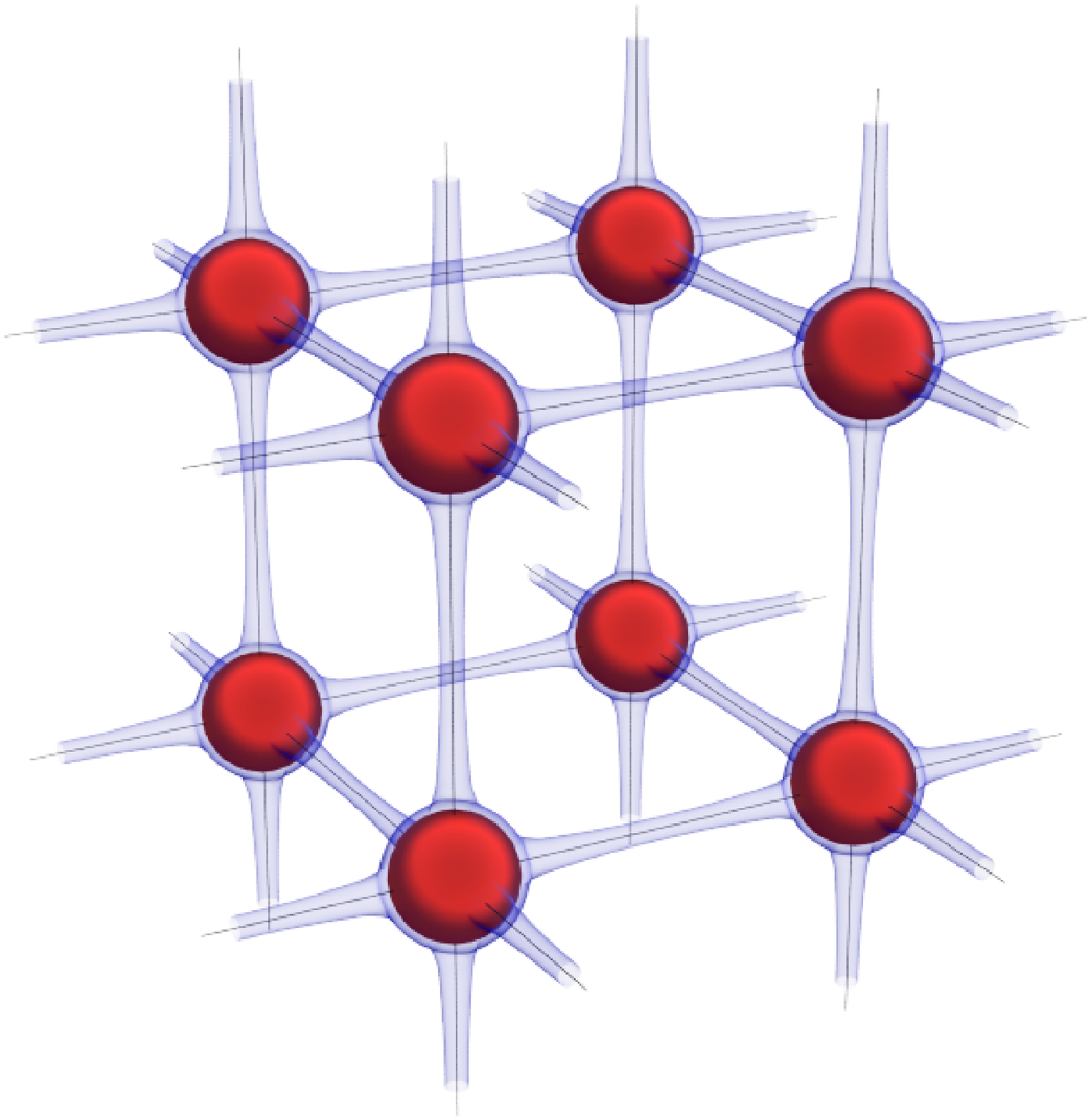}
\end{center}
\caption{The nodes at low temperature.}
\end{figure}

\section{New model for metal}

	Here we give the main postulates of the new model, see \cite {b:V}. 

	1. The metal has a crystal structure. 

In 1912 Laue talked a report in Bavarian Academy of sciences about 
interference of Rentgen's array. He announced that the experimental results show, that 
metal consists of crystal lattice. 
The positively charged ions are relatively 
immobile and form nodes with positive total charge. 
On each node there are relatively free electrons. Those electrons, are named semi - free 
electrons. They can freely move thought node, but to leave the node semi - free electron 
needs some additional energy.

	2. The existence of free electrons cloud.

In a metal there is a cloud of free electrons. An elegant 
experiment confirm the existence of free electrons was proposed by 
J. Maxwell. That experiment is not easy to realize because of the weak 
expected effect. Nevertheless, it was done after J. Maxwell (1831 - 1879) 
death, by Tolmen and Stewart in 1916. They built a coil with many turns 
and put it in rapid rotation. When the coil suddenly stops, through it 
passes a current. It was found, that through the coil move particles 
with negative charges. This experiment demonstrates the relatively 
independence of free electrons cloud and the crystal lattice.

3. We assume that the electric conductions are caused by free 
electrons motion.

	If we switch on an outside electrical field, then in the cloud of 
free - electrons a directed motion arises. That wind is interpreted 
as an electrical current.

4.  The resistance is conditioned by chaotic motion of free electrons.

	If a free electron, by outside influence, gains a directed motion component, then 
due time it will lose energy. That is caused of the chaotically moving 
free electrons medium. After some 
time period the directed component will vanishes. 
This effect is the cause of resistance. 

This remark follows that if in the given metal the number of free electrons are much more than 
semi - free electrons, then it is good conductor. Similarly, if in the given metal the free electrons 
are less than semi - free electrons, then it is bad conductor.

5. Some metals, which are not superconductors, gain that property 
condition of the huge pressure, see \cite {b:Kob}.

\section{Discussion of experimental results}
	The domains, occupied by the nodes of crystal, let us denote by
\[
E_n(t,T),\quad n=1,2,\dots, N,
\]
where $t$ is the time and $T$ is the temperature.

	We assume that there are immobile points $\vec{x}_n,\quad n=1,2,\dots, N,$ such that
\[
B(\vec{x}_n,r-\epsilon)\subseteq E_n(t,T)\subseteq B(\vec{x}_n,r+\epsilon),
\]
where $0<\epsilon$ is a small number.  We assume that at each time moment $t$ the 
semi - free electrons are in equilibrium state.
Let us denote the potential function of the whole system by:
\[
U(\vec{x},t,T).
\]
	There is a cloud of free electrons, which are placed out of nodes 
and have as more as possible minimum total energy. 

	In the other words, the free electrons cloud
form a sea and the nodes are isolated islands on that sea.

	Let us assume that the potential function is constant on each node and there 
is a number $U_0(T)$ such that
\[
\bigcup_{n=1}^NE_n\subseteq \bigcup_{n=1}^NB(\vec{x}_n,r+\epsilon)\subseteq \{\vec{x};\,\,\,U_0(T)<U(\vec{x},t,T)\}.
\]

	Let there is a number $U_1(T)<U_0(T)$ such that all free 
electrons are placed in the subset 
\[
S(t,T)=\{\vec{x};\,\,\,U(\vec{x},t,T)<U_1(T)\}
\]
	We assume that the electrical conduction is 
related with the motion of free and semi - free electrons. 

	The forces keeping a semi - free electron on the boundary of node 
can not be electrical. Nevertheless, we put these conditions without 
any discussing.

	Now let us tray to explain the given above experimental results 
in frame of our model.

	At the enough low temperature the nodes get a perfect spherical shape.
As a consequence the subset 
\[
\{\vec{x};\,\,\,U_1(T)<U(\vec{x},t,T)\}
\]
become connected. On this subset the semi free electrons move without resistance.
So, the semi free electrons no restriction feel during of they motion in spite of 
potential function form. This is the cause of sudden rejection of resistance.

	Such a scenario depends upon the properties of the given metal of course. 
We assume that the spherical shape is characteristic for metals, which have the 
superconductivity property. 

	The critical temperature. The electrical forces push the electron on 
the boundary out of node. The electron stays on the boundary thanks of 
nucleus forces. If the electron's velocity is directed out of the node and 
it is large enough, the electron will go out from the node and it will go 
to the other node. 

	Since the segment, which connects the centers of the nodes, in low temperature? 
is placed out of $S(t,T)$, a semi - free electron will passes the distance between 
the nodes will not loss an energy. 
Thus, the segments which connect the centers of near placed nodes, form a 
ways by which the electrons can pass losing no energy.
This remark explains the existence of critical temperature.

	Now let as consider the effects before critical temperature. 
Let the temperature decreases. 
By weakening the chaotic motions inside of nodes, the shape of nodes 
become more like to the perfect ball. As a consequence the oscillation of 
potential function, on the lines connected the neighborhood placed 
nodes, become smaller. As a result the potential barriers arise.  
This follows that free - electrons must spend additional energy to overcome those barriers. 
Consequently, the resistance increases.

	It is well known that some ideal conductors are bad superconductors and vice-versa.
This effect has natural explanation in our model. Indeed conductivity is conditioned by free electrons while
superconductivity is conditioned by semi-free electrons.

	It is enough to note that in ideal conductors there are significantly more free electrons 
than semi-free electrons. 

	The role of magnetic field. On the electron, moving in a magnetic 
field, acts the force orthogonal to the direction of motion and to the 
direction of magnetic field.
 
	Let the magnetic field have the direction on OZ axe   and the OX axe 
connect the neighborhood placed colonies centers. Let an electron be 
on the boundary of the first colony. Then the electron will move by the 
curve, see [2], p. 172,
\begin{equation*}
\left(\frac{mv}{eH}\sin\frac{eHt}{m},\,\, \frac{mv}{eH}\left(1-\cos\frac{eHt}{m}\right),\,\,0\right)
\end{equation*}
Where $\vec{v}=(v,\,\,0,\,\,0)$ the electron has starting velocity.

	So, if  $H$ is bigger, the semi - free electron, which begins its motion 
on the surface of a nodes, will go out from the narrow way connecting the 
nodes centers. As a result electron appears in the cloud of free electrons. 
So, the conductor losses its superconducting property.
 
	Role of the pressure. For the following elements 

Be, Cr, Ba, Si, Ge,  Se, Sb, Te, Bi

the superconducting property was observed only in huge pressure, see also \cite{b:A}.
 
	For above mentioned metals, in low temperature, the nodes become the 
form of perfect balls, but they are far from. That is why the oscillation 
of potential function on the way connecting the nodes centers is bigger. 
If we increase the pressure, these nodes approached and the ways, by 
which the potential function has small oscillation, appear. As a result, 
the superconducting property arises.

	Role of current magnitude. If the current magnitude increases the 
semi - free electrons, moving in parallel ways, will interact and they 
will go out from the narrow ways which connect the nodes centers. 
As a consequence the conductor losses superconducting property.

\section{The guessed effects}

	Let us note that there are some effects which are caused of 
suggested new model.

	1. Let us note that the semi - free electron needs some energy to leave the node. 
So, if the current has small magnitude, then a semi - free electron
can not take part in conduction. 
This remark follows, that for currents of sufficient small magnitude, 
the superconductivity effect is absent. I have not on the hand an experimental result
conform this hypothesis.

	2. From our model it follows that superconductivity must be not homogeneous.
This is caused by configuration of narrow ways, which arise in low temperature 
and connect the nodes.
Let us note that those narrow ways makes a metal not homogeneous. 
Consequently, the currents in different 
directions will feel different resistances.

\end{document}